\newcommand{\cN}{\mathcal{N}}
\newcommand{\cT}{\mathcal{T}}
\newcommand{\xyz}{\{x,y,z\}}
\newcommand{\xy}{\{x,y\}}
\newcommand{\R}{{\mathbb R}}
\newtheorem{definition}{Definition}
\newtheorem{theorem}{Theorem}
\newtheorem{observation}{Observation}
\newtheorem{corollary}{Corollary}
\title{Trinets encode tree-child and level-2 phylogenetic networks}
\author{Leo van Iersel}
\thanks{Leo van Iersel was supported by a Veni grant of The Netherlands Organisation for Scientific Research (NWO)}
\address{Centrum Wiskunde \& Informatica (CWI)\\ P.O. Box 94079\\
1090 GB Amsterdam, The Netherlands}
\email{l.j.j.v.iersel@gmail.com}
\author{Vincent Moulton}
\address{School of Computing Sciences\\
University of East Anglia\\
Norwich, NR4 7TJ, United Kingdom}
\email{vincent.moulton@cmp.uea.ac.uk}
\begin{document}
\maketitle

\begin{abstract} 
Phylogenetic networks generalize evolutionary trees, and are 
commonly used to represent evolutionary histories of species 
that undergo reticulate evolutionary processes such as
hybridization, recombination and lateral gene transfer.
Recently, there has been great interest in trying to develop
methods to construct rooted phylogenetic networks from
\emph{triplets}, that is rooted trees on three species.
However, although triplets determine 
or \emph{encode} rooted phylogenetic trees, they do not in general encode 
rooted phylogenetic networks, which is
a potential issue for any such method.
Motivated by this fact, Huber and Moulton recently introduced 
\emph{trinets} as a natural extension of rooted 
triplets to networks. In particular, they showed that $\text{level-1}$ 
phylogenetic networks \emph{are} encoded by their trinets, and
also conjectured that all ``recoverable'' rooted 
phylogenetic networks are encoded by their trinets. Here 
we prove that recoverable binary level-2 networks and 
binary tree-child networks are also encoded by 
their trinets. To do this we prove two decomposition 
theorems based on trinets which hold for \emph{all} 
recoverable binary rooted phylogenetic networks. 
Our results provide some additional evidence 
in support of the conjecture that trinets 
encode all recoverable rooted phylogenetic networks, and
could also lead to new approaches to construct
phylogenetic networks from trinets.
\end{abstract}

\section{Introduction}

Phylogenetic trees are routinely used in biology
to represent the evolutionary relationships between
a given set of species. More formally, for a set $X$ 
of species, a \emph{rooted phylogenetic tree} is a 
rooted (graph theoretical) tree that has no 
indegree-1 outdegree-1 vertices, and in which the 
leaves are bijectively labelled by the elements in 
$X$ \cite{SempleSteel2003}; a \emph{(rooted) triplet} 
is a phylogenetic tree with three leaves.
Given a rooted phylogenetic tree $T$ and 
three of its leaves, there is 
unique triplet spanned by those leaves that is contained in~$T$.
A fundamental result in phylogenetics states that 
$T$ is in fact \emph{encoded} by its triplets, that is,
$T$ is the unique phylogenetic tree 
containing the set of triplets 
that arises from taking all combinations of three leaves 
in~$T$~\cite{DressBasic}.
This result is important since it has led to 
various approaches to constructing phylogenetic
trees from set of triplets cf. e.g. \cite{HabibTo2012,lev1athan,RECOMB2008}.

Recently, there has been some interest in 
using networks rather than trees to represent evolutionary
relationships between species that
have undergone reticulate evolution
\cite{HusonRuppScornavacca10,davidbook}. This
is motivated by the fact that processes such as
hybridization, recombination and lateral gene transfer
can lead to evolutionary histories which are not 
best represented by a tree. Formally, a \emph{(rooted phylogenetic) network} 
for a set $X$ of species
is a directed acyclic graph that has a single root, 
has no indegree-1 outdegree-1 vertices, and has 
its leaves bijectively labelled by~$X$ (see Section~\ref{sec:prelim}
for full definitions concerning networks). 
Such a network is called \emph{binary} if all vertices have indegree 
and outdegree at most two and all vertices with indegree 
two have outdegree one. In addition, a binary network is called  
\emph{level}-$k$ \cite{Gambette2009structure,GBP2012,GambetteHuber2012,HabibTo2012,RECOMB2008,simplicityAlgorithmica} if each biconnected component  
has at most~$k$ indegree-2 vertices, and it is
called \emph{tree-child} \cite{CLRV2010b,Cardona2007,ISS2010b,Willson2012b}
if each non-leaf vertex  has at 
least one child which has indegree 1.
Note that a rooted phylogenetic tree 
is a network, but that networks
are more general since they can represent evolutionary
events where species combine rather than speciate.

\begin{figure}
    \centering
    \includegraphics[scale=.7]{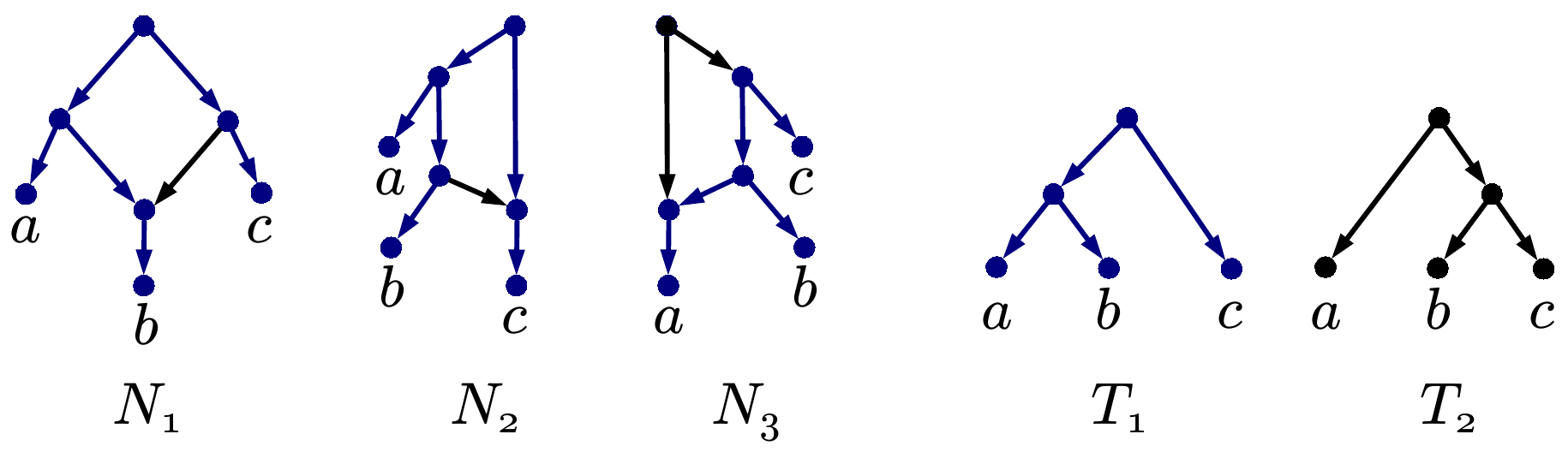}
    \vspace{-.3cm}
   \caption{Three non-isomorphic tree-child, level-1 networks that all have the same set of rooted triplets, that is, $Tr(N_1)=Tr(N_2)=Tr(N_3)=\{T_1,T_2\}$. Blue is used to illustrate how~$T_1$ is contained in~$N_1,N_2$ and~$N_3$. 
}\label{fig:intro}
\end{figure}

As with phylogenetic trees,
efficient algorithms have been developed which, given 
a set of triplets, aim to build a network that contains this set 
(see e.g.~\cite{BGJ10,HabibTo2012,lev1athan,JanssonEtAl2006}). However, these 
algorithms share a common weakness in that, even if
\emph{all} of the triplets within a given network are 
taken as input, there is no guarantee that the
original network will be reconstructed.
This is because, in contrast to trees, the triplets 
in a network do not necessarily
encode the network~\cite{GambetteHuber2012}.
For example, Figure~\ref{fig:intro} presents three different networks 
that all contain the same set of triplets.
Note that a similar observation has been made concerning the
set of trees and set of clusters displayed by a network 
(see e.g.~\cite{HusonRuppScornavacca10,twotrees}).

Motivated by this problem, Huber and Moulton~\cite{huber2011encoding}
recently proposed a possible alternative way to 
encode rooted phylogenetic networks by
introducing a natural extension of rooted triplets to
networks. More specifically, a
\emph{trinet} is a rooted phylogenetic network on three leaves.
As with the triplets in a tree, a network contains or ``exhibits'' a 
trinet on every three leaves (see Section~\ref{sec:prelim}). 
For example, Figure~\ref{fig:networktrinet} presents a 
phylogenetic network and four 
of the trinets that it exhibits. 
The main result in \cite{huber2011encoding} implies that  
$\text{level-1}$  networks encoded by their trinets.
Moreover, it is conjectured that any 
``recoverable'' network (a network 
that satisfies some relatively mild condition which we recall below) 
is also encoded by its trinets. Here, we 
provide some evidence in support of this conjecture 
by showing that recoverable level-2 and tree-child networks 
are also encoded by their trinets. 

\begin{figure}
    \centering
    \includegraphics[scale=.7]{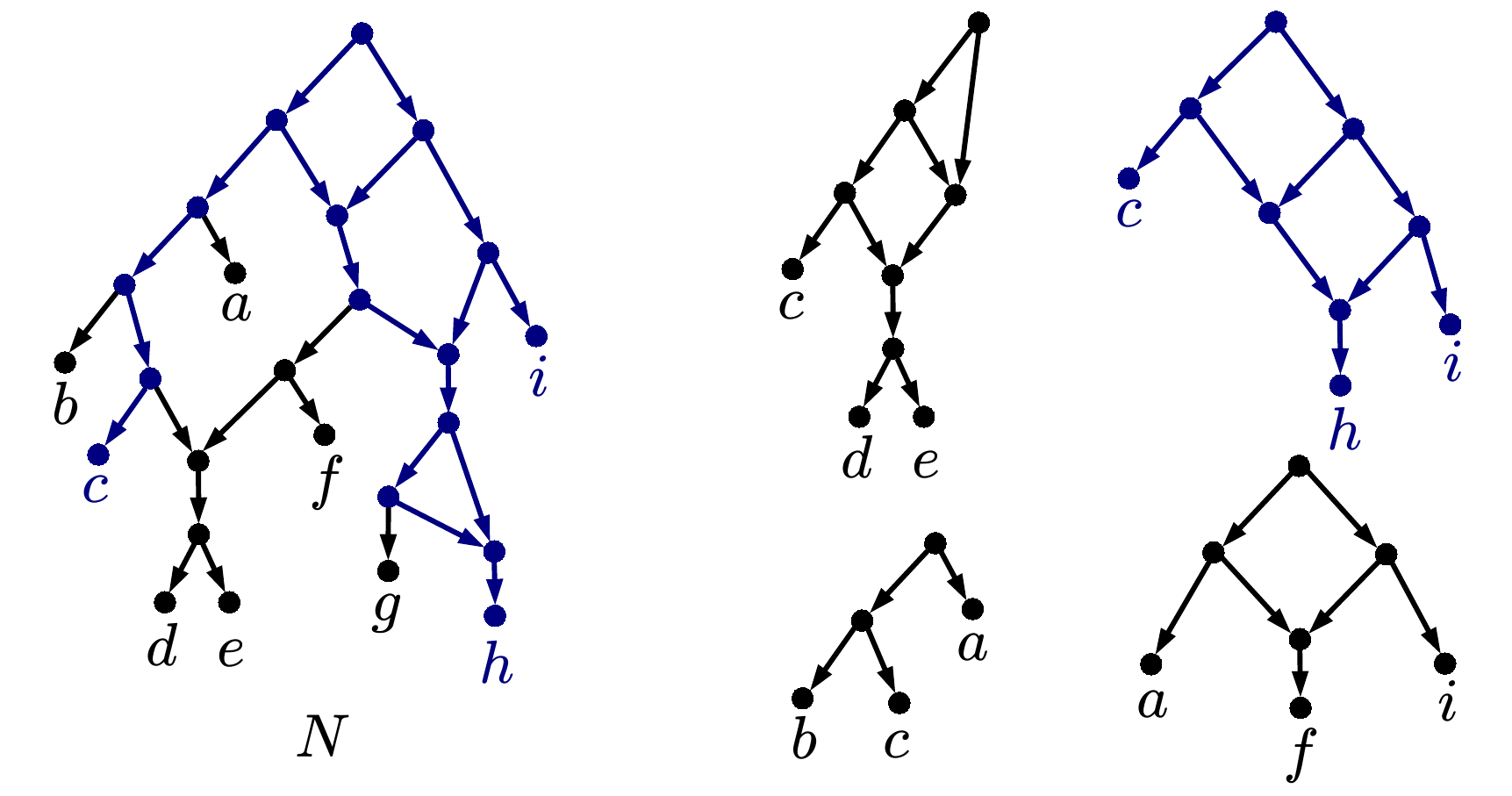}
    \caption{Example of a rooted phylogenetic network~$N$ (left) and four of the trinets exhibited by~$N$ (right). The network~$N$ is binary, recoverable, has level~3 and has the tree-child property. Blue  is used to illustrate how~$N$ exhibits the pictured trinet on~$\{c,h,i\}$.}\label{fig:networktrinet}
\end{figure}

We now give an overview of the rest of 
this paper, in which all networks are assumed to be binary.
After presenting some preliminaries
in Section~\ref{sec:prelim}, we begin by studying the 
relationship between the structure of a 
network and the trinets that it exhibits.
In particular, in Section~\ref{sec:decomp}
we present two decomposition 
theorems for general networks.
Essentially, these two theorems state that
the cut-arcs of a network (that is, arcs 
whose removal disconnect the network) can be directly deduced from 
its set of trinets (Theorem~\ref{thm:caset}), and 
that a network is encoded by its trinets 
if and only if each of its biconnected components
is encoded by its trinets (Theorem~\ref{thm:bcc}). 
In tandem, these theorems essentially restrict the 
problem of deciding whether or not trinets
encode networks to the class of networks
that do not have any cut-arcs apart from pendant arcs 
(so-called ``simple'' networks).

By restricting our attention to simple networks, 
in Section~\ref{sec:lev2} we 
show that a recoverable $\text{level-2}$ network
is always encoded by its trinets (Corollary~\ref{enc-lev2}).
To do this, we use the concept of 
``generators'' for level-$k$ networks, using 
the generators for level-$2$ networks 
presented in \cite{RECOMB2008}.
In Section~\ref{sec:treechild},
we then use alternative techniques
to prove that tree-child networks
are also encoded by their trinets (Theorem~\ref{thm:treechild}).
Note that this class of networks includes the class of 
regular networks \cite{BaroniEtAl2004}. Thus it is 
interesting to  note that
a regular network is encoded by the set 
of trees\footnote{Note that all 
of these trees have the same leaf-set 
as the network.} that it contains~\cite{Willson2010},
but that this is not the case for tree-child networks (e.g. all of the 
networks in Figure~\ref{fig:intro} contain the same set of trees).
We conclude with a discussion of our results, two corollaries, and some 
possible future directions in Section~\ref{sec:postlude}.

Ultimately, it is hoped that the results
presented in this paper
will lead to new methods for constructing phylogenetic
networks. In principle it should be straight-forward to 
infer low-level trinets for 
biological datasets consisting of molecular sequences using 
existing methods to construct phylogenetic networks.
For example, given a multiple sequence alignment, 
the most parsimonious or most likely level-1 or level-2 trinet for
every sub-alignment of three sequences 
could be computed using, e.g., methods
described in \cite{JNST06,JNST09}, 
which becomes computationally tractible since there are a bounded number of 
such trinets (under certain natural restrictions, see Sections~\ref{sec:prelim} and~\ref{sec:postlude}). The structural results in this paper, such as
the decomposition theorems presented in 
Section~\ref{sec:decomp}, could then be used 
to help design algorithms to construct 
networks from the trinets inferred in this way.
Note that this has the potential advantage 
that `breakpoints' need not be computed for the
multiple alignment, a first (and sometimes quite difficult) step 
that is commonly required for constructing phylogenetic 
networks from phylogenetic trees or clusters (cf. 
e.g. \cite[Section 2]{N11}).

\section{Preliminaries}\label{sec:prelim}

Throughout the paper,~$X$ is a finite set. As
mentioned in the introduction, a \emph{rooted phylogenetic network} on~$X$ is a directed acyclic graph with a single indegree-0 vertex (the \emph{root}) and a bijective labelling of its outdegree-0 vertices (\emph{leaves}) by the elements of~$X$. We identify each leaf with its label. A phylogenetic network is \emph{binary} if all vertices have indegree and outdegree at most~2 and all vertices with indegree~2 have outdegree~1. We will often refer to a rooted phylogenetic network simply as a \emph{phylogenetic network} or a \emph{network} for short. See Figures~1,2 and~3 for examples. Let~$u$ and~$v$ be two vertices of a phylogenetic network~$N$. If~$(u,v)$ is an arc of~$N$, then we say that~$u$ is a \emph{parent} of~$v$ and that~$v$ is a \emph{child} of~$u$. Furthermore, we write $u\leq_N v$ and say that~$v$ is \emph{below}~$u$, if there is a directed path from~$u$ to~$v$ in~$N$, or $u=v$. For two leaves~$x$ and~$y$, we say that~$x$ is \emph{below}~$y$ if the parent of~$x$ is below the parent of~$y$. For an arc~$a=(u,v)$ and a vertex~$w$, we say that~$w$ is \emph{below}~$a$ if~$w$ is below~$v$.

Let~$D$ be a directed graph with a single root~$\rho$. The indegree of a vertex~$v$ of~$D$ is denoted $\delta^-(v)$ and~$v$ is said to be a \emph{reticulation vertex} or a \emph{reticulation} if $\delta^-(v)\geq 2$. The \emph{reticulation number} of~$D$ is defined as
\[
r(N) = \sum_{v\neq\rho} (\delta^-(v)-1).
\]
Hence, the reticulation number of a binary network is simply the number of its reticulation vertices.

We say that a vertex~$v$ of~$D$ is a \emph{cut-vertex} if its removal disconnects the underlying undirected graph of~$D$. Similarly, an arc~$a$ of~$D$ is a \emph{cut-arc} if its removal disconnects the underlying undirected graph of~$D$. A directed graph is called \emph{biconnected} if it has no cut-vertices. A \emph{biconnected component} is a maximal biconnected subgraph (i.e. a biconnected subgraph that is not contained in any other biconnected subgraph). Note that, by this definition, each cut-arc is a biconnected component. We call these the \emph{trivial biconnected components}. 
Thus, rephrasing the definitions given in the introduction, a 
phylogenetic network is \emph{level-$k$} if each  biconnected component 
has reticulation number at most~$k$, and 
it is  \emph{tree-child} if every non-leaf vertex of
the network has at least one child that is not a reticulation.

Given a nontrivial biconnected component~$B$, we say that~$B$ is \emph{redundant} if it has only one outgoing arc and we say that~$B$ is \emph{strongly redundant} if it has only one outgoing arc~$(u,v)$ and all leaves of the network are below~$v$. We say that a phylogenetic network~$N$ is \emph{recoverable} if it has no strongly redundant biconnected components (see e.g. Figure~\ref{fig:notrecov}). We remark that all level-1 networks are recoverable~\cite{huber2011encoding}. Moreover, neither level-1 nor tree-child networks can have any redundant biconnected components. On the other hand, there are level-2 networks that \emph{do} have redundant (and strongly redundant) biconnected components (see Figure~\ref{fig:notrecov}).

\begin{figure}
    \centering
    \includegraphics[scale=.7]{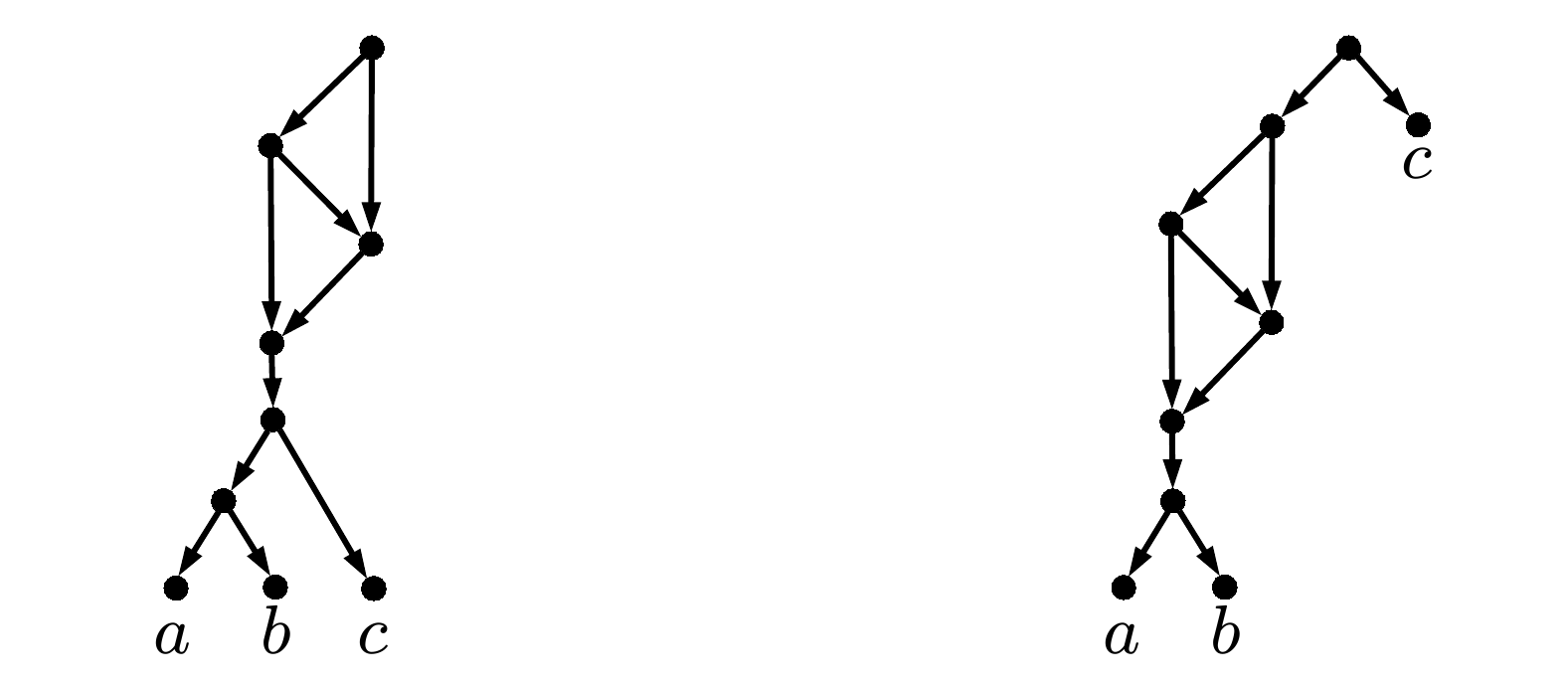}
    \vspace{-.3cm}
    \caption{The phylogenetic network on the left is not recoverable because it has a strongly redundant biconnected component. The phylogenetic network on the right is recoverable, because its only nontrivial biconnected component, although redundant, is not strongly redundant.\label{fig:notrecov}}
\end{figure}

A \emph{trinet} is a phylogenetic network with three leaves. Ignoring leaf-labels, there are~14 distinct level-1 trinets,~8 of which are binary~\cite{huber2011encoding}. Note that there is an infinite number of level-2 trinets, and even of recoverable level-2 trinets. On the other hand, it is not too difficult to see that the number of level-2 trinets without redundant biconnected components is finite (in fact, the number of level-$k$ trinets without redundant biconnected components is bounded by a function of~$k$). We shall return to this point in Section~\ref{sec:postlude}.

Given a network~$N$ on~$X$ and~$X'\subseteq X$, a \emph{lowest stable ancestor} $LSA(X')$ is defined as a vertex~$w\notin X'$ of~$N$ for which all paths from the root to any~$x\in X'$ pass through~$w$, and such that no vertex below~$w$ has this property. A \emph{lowest common ancestor} of~$X'$ in~$N$ is a vertex~$w$ such that~$w\leq_N x$ for all~$x\in X'$ and no vertex below~$w$ has this property. Note that the lowest stable ancestor is unique but that this not necessarily the case for a lowest common ancestor \cite{FH10}. If a lowest common ancestor of~$X'$ \emph{is} unique, then we denote it by $LCA(X')$. For two vertices~$u,v$, we write $LSA(u,v)$ as shorthand for $LSA(\{u,v\})$ and $LCA(u,v)$ as shorthand for $LCA(\{u,v\})$. The following easily proven fact  
will be useful later on.

\begin{observation}\label{obs:lsa}
If~$N$ is a phylogenetic network on~$X$ and~$X'\subseteq X$ with~$X'\geq 2$, then there exist $x,y\in X'$ such that $LSA(x,y)=LSA(X')$.
\end{observation}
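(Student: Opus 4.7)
The plan is to fix $w := LSA(X')$, pick any $x \in X'$, and construct a companion $y \in X'$ with $LSA(\{x,y\}) = w$. The guiding intuition is that the minimality of $w$ forces, for every vertex strictly below $w$, the existence of some leaf of $X'$ whose root-to-leaf paths avoid it; if the avoided vertex is chosen carefully relative to $x$, a single $y$ will block every potential stable ancestor of $\{x,y\}$ below $w$ at once.

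First I would recall the classical fact that in a DAG with a single source, the vertices which dominate a fixed vertex $v$ (i.e.\ lie on every root-to-$v$ path) are linearly ordered by the ancestor relation; this follows from acyclicity by concatenating a root-to-$d_1$ path with any $d_1$-to-$v$ path and locating $d_2$ on the resulting simple path. Applied to $x$, the dominators of $x$ that lie strictly between $w$ and $x$ form a chain, which is either empty or has a unique top element $d$. If the chain is empty, then any $y \in X' \setminus \{x\}$ works: a hypothetical stable ancestor $v$ of $\{x,y\}$ strictly below $w$ with $v \notin \{x,y\}$ would have to dominate $x$ and hence belong to this (empty) chain, which is impossible.

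If the chain is nonempty, the defining property of $w$ gives that $d$ is not a stable ancestor of $X'$, so some $y \in X'$ admits a root-to-$y$ path avoiding $d$. For this $y$, any stable ancestor $v$ of $\{x,y\}$ strictly below $w$ must dominate $x$ and so lie in the chain between $w$ and $x$; by maximality of $d$, $d$ dominates $v$; since $v$ dominates $y$, transitivity of dominance forces every root-to-$y$ path through $d$, contradicting the choice of $y$. Thus $LSA(\{x,y\}) = w$.

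The main obstacle, and really the only subtle point, is the appeal to the chain structure and transitivity of the dominator relation; both are classical in DAG/flow-graph theory but deserve a one-line justification from the acyclicity of $N$. Everything else is a direct unpacking of the definitions of dominator and lowest stable ancestor, together with the observation that any $v$ strictly below $w$ fails to be a stable ancestor of $X'$.
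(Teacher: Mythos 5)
Your proof is correct. Note that the paper offers no proof of this observation at all---it is stated as an ``easily proven fact''---so there is no authorial argument to compare against; your dominator-chain argument is a perfectly good way to discharge it. The two deferred facts are indeed immediate from acyclicity: concatenating a root-to-$d_1$ path with a $d_1$-to-$v$ path yields a simple root-to-$v$ path (a repeated vertex would close a directed cycle), which forces any two dominators of $v$ to be comparable, and the same concatenation trick shows that a higher dominator of $x$ dominates a lower one, giving the transitivity you invoke in the nonempty-chain case. The case split is exhaustive because every stable ancestor of $\{x,y\}$ dominates $x$ and hence lies in the chain of dominators of $x$, which contains $w$; one might add the small remark that in the nonempty case the leaf $y$ avoiding $d$ automatically satisfies $y\neq x$ (since $d$ dominates $x$), so $\{x,y\}$ is genuinely a pair, but this is cosmetic rather than a gap.
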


Given a phylogenetic network~$N$ on~$X$ and $\xyz\subseteq X$, the trinet on~$\xyz$ \emph{exhibited} by~$N$ is defined as the trinet obtained from~$N$ by deleting all vertices that are not on any path from $LSA(\xyz)$ to~$x$, $y$ or~$z$ and subsequently suppressing all indegree-1 outdegree-1 vertices and parallel arcs. See Figure~\ref{fig:networktrinet}
for some examples. We note that this definition is equivalent to the definition of ``display'' in~\cite{huber2011encoding} but we call it ``exhibit'' to clearly distinguish it from other usages of ``display'' (in particular, the definition of when a network displays a tree or triplet). We will often (implicitly) use the following observation.

\begin{observation}\label{obs:exhibit}
Given a phylogenetic network~$N$ on~$X$ and $\xyz\subseteq X$, the trinet on~$\xyz$ exhibited by~$N$ can be obtained from~$N$ by removing all leaves except~$x,y$ and~$z$ and repeatedly applying the following operations until none is applicable:
\begin{itemize}
\item deleting all unlabelled outdegree-0 vertices;
\item deleting all indegree-0 outdegree-1 vertices;
\item suppressing all indegree-1 outdegree-1 vertices;
\item suppressing all parallel arcs; and
\item suppressing all strongly redundant biconnected components.
\end{itemize}
\end{observation}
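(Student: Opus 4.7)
The plan is to show that the operational procedure in the observation produces precisely the network $N'$ obtained by the original definition (delete every vertex not on a path from $LSA(\xyz)$ to $x$, $y$, or $z$, then suppress indegree-1 outdegree-1 vertices and parallel arcs). Call the operational result $N''$. Termination is easy: each listed operation strictly decreases the number of vertices or of nontrivial biconnected components, so the process halts after finitely many steps. I would first argue that the final $N''$ does not depend on the order in which operations are applied; this follows from a standard local-confluence check since the rewrite rules only delete or suppress vertices, never create them.

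I would then classify each vertex $v$ of $N$ as: (a) lying on some directed path from $LSA(\xyz)$ to one of $x,y,z$; (b) strictly above $LSA(\xyz)$; or (c) neither above $LSA(\xyz)$ nor on such a path. For group~(c), every descendant leaf lies in $X\setminus\xyz$ and has been removed in the preparatory step, so a bottom-up induction on the reverse topological order shows each such $v$ eventually becomes an unlabelled outdegree-0 vertex (or an indegree-0 outdegree-1 vertex along the way) and is deleted. After group~(c) is cleared, the remaining suppression of indegree-1 outdegree-1 vertices and parallel arcs inside the sub-DAG on group~(a) exactly mirrors the suppression step in the definition of $N'$.

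Group~(b) is where I expect the real work. A vertex strictly above $LSA(\xyz)$ may still reach $\{x,y,z\}$, but by the defining property of $LSA$ every such path must pass through $LSA(\xyz)$. If $v$ lies on a simple chain from the root down to $LSA(\xyz)$, it is dispatched by the indegree-0 outdegree-1 or indegree-1 outdegree-1 rules. The hard case is when $v$ belongs to a nontrivial biconnected component $B$ above $LSA(\xyz)$. Here I would show that, after groups~(c) and the dangling chains are cleaned up, $B$ has exactly one outgoing arc $(u,w)$: any other outgoing arc would provide a path to $\{x,y,z\}$ avoiding $LSA(\xyz)$, contradicting the $LSA$ property (this is essentially the content of Observation~\ref{obs:lsa} applied to the descendants of $B$). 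Moreover, at this stage all remaining leaves $x,y,z$ lie below $w$, so $B$ is strongly redundant and the final rule in the list removes it. Iterating, all of group~(b) is eliminated.

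Combining the three cases, the vertices surviving in $N''$ are exactly those of group~(a), with suppression carried out identically to the definition of $N'$; both networks are rooted at $LSA(\xyz)$ and carry the same labelled leaves, so $N' = N''$. The main obstacle is case~(b), and specifically the verification that the nontrivial biconnected components above $LSA(\xyz)$ become strongly redundant only after the other cleanup operations have done their work, which is why the operations must be applied to saturation rather than in a single pass.
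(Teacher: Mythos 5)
The paper itself offers no proof of this observation (it is stated as a routine fact to be used implicitly), so there is no argument to match against; your overall plan --- trichotomize the vertices relative to $LSA(\xyz)$, clear group~(c) bottom-up, and show group~(b) is consumed by chain deletions and strongly-redundant suppressions --- is the natural formalization and its skeleton is sound. However, your justification of the decisive claim in the hard case is wrong as stated. You argue that a nontrivial biconnected component~$B$ strictly above $LSA(\xyz)$ has a single outgoing arc because ``any other outgoing arc would provide a path to $\xyz$ avoiding $LSA(\xyz)$.'' It would not: every surviving vertex above $LSA(\xyz)$ reaches the remaining leaves \emph{only} through $LSA(\xyz)$, so a second outgoing arc produces no such forbidden path, and Observation~\ref{obs:lsa} (existence of $x,y$ with $LSA(x,y)=LSA(X')$) is not the relevant tool. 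The correct argument is via \emph{maximality} of~$B$: in the cleaned-up graph every vertex reaches one of $x,y,z$, so the heads $v_1,v_2$ of two outgoing arcs of~$B$ would both have directed paths to $LSA(\xyz)$; since $LSA(\xyz)$ has indegree at most one (it cannot be a reticulation, as its unique child would then be a lower stable ancestor), these two paths first meet at a reticulation, and $B$ together with the two paths up to that reticulation is a biconnected subgraph strictly containing~$B$, contradicting that $B$ is a biconnected \emph{component}.

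A second, symmetric point is missing entirely: you must also verify that the fifth rule never fires \emph{below} $LSA(\xyz)$, i.e.\ that the graph induced on group~(a) contains no strongly redundant biconnected component --- otherwise the procedure would suppress something that the original definition of ``exhibit'' keeps, and your assertion that the processing of group~(a) ``exactly mirrors the suppression step'' would be false. This is not automatic and needs the same maximality mechanism: if a maximal biconnected component of the current graph has a single outgoing arc $(u,v)$, then $(u,v)$ must be a cut-arc (any undirected return path from $v$ to the component would enlarge it), so if in addition all leaves lie below~$v$, every root-to-leaf path crosses $(u,v)$ and $v$ is a stable ancestor of $\xyz$ strictly below $LSA(\xyz)$, contradicting lowest-ness. (The same cut-arc observation is what makes your group~(b) components strongly redundant once they have a unique outgoing arc, so proving it once serves both purposes.) Finally, the appeal to local confluence can simply be dropped: running your classification argument for an \emph{arbitrary} maximal sequence of rule applications already shows every such sequence terminates in the trinet of the original definition, which is all the observation asserts.
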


\medskip

The following observation, linking lowest common ancestors
in networks and their exhibited trinets, will
be used in the proof of Theorem~\ref{thm:treechild}.

\begin{observation}\label{obs:lcatrinet}
Suppose that~$u$ is the unique lowest common ancestor of two 
leaves~$x$ and~$y$ in a network~$N$ and that~$P$ is a trinet 
exhibited by~$N$ that 
contains~$x$ and~$y$. Then,~$P$ contains~$u$ (where we consider $P$ as
being obtained from $N$ as described in Observation~\ref{obs:exhibit}) 
and~$u$ is the unique lowest common ancestor of~$x$ and~$y$ in~$P$.
\end{observation}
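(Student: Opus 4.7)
My plan is to establish Observation~\ref{obs:lcatrinet} in three stages: (1) show that $u \in V(P)$; (2) show that $u$ is a common ancestor of $x,y$ in $P$; and (3) show that $u$ is their unique lowest common ancestor in $P$. A recurring tool I would use is the routine fact that each operation listed in Observation~\ref{obs:exhibit} preserves directed reachability between vertices that survive it, so that for any $p,q \in V(P)$ one has $p \leq_N q$ if and only if $p \leq_P q$.

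For (1), I would first observe that in the binary network $N$ the vertex $u$ must have outdegree $2$: if $u$ were a reticulation with unique child $u'$, then $u'$ would be a common ancestor of $x$ and $y$ strictly below $u$, contradicting uniqueness of the LCA. Writing $u_1, u_2$ for the two children, the same uniqueness argument, applied to each $u_i$, forces (after relabelling) that $u_1$ is an ancestor of $x$ but not of $y$, while $u_2$ is an ancestor of $y$ but not of $x$. Consequently, $u$, $u_1$ and $u_2$ all lie on directed paths from $\sigma := LSA(\xyz)$ to $\{x,y,z\}$, so each survives the initial deletion step of Observation~\ref{obs:exhibit}. Since $u$ still has outdegree $2$ in the restricted graph, it cannot be eliminated by the indegree-$1$ outdegree-$1$ suppression or by the parallel-arc rule. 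The remaining worry is that $u$ might sit inside a strongly redundant biconnected component $B$; I would rule this out by noting that the unique outgoing arc $(v,w)$ of any such $B$ would force $w$ to be a common ancestor of $x$ and $y$ strictly below $u$ in $N$, once again contradicting uniqueness of the LCA.

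Step (2) then follows immediately from reachability preservation: the $N$-paths $u \to x$ and $u \to y$ project to directed paths in $P$, making $u$ a common ancestor of $x,y$ in $P$.

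For (3), I would first prove an auxiliary claim: every common ancestor $a$ of $x,y$ in $N$ satisfies $a \leq_N u$. Starting from $a$ and repeatedly descending to any strictly lower common ancestor produces, by acyclicity of $N$, a minimal common ancestor, which is an LCA of $\{x,y\}$ and hence equals $u$ by uniqueness. Now let $a$ be any common ancestor of $x,y$ in $P$. Applying reachability preservation in the reverse direction, $a$ is also a common ancestor of $x,y$ in $N$, hence $a \leq_N u$; applying reachability preservation in the forward direction to this $N$-path gives $a \leq_P u$. Thus $u$ sits below every common ancestor of $x,y$ in $P$, so it is their unique LCA in $P$. The main obstacle I anticipate is the case analysis in Step~(1) verifying that $u$ cannot be absorbed by a strongly redundant biconnected component; once that is handled, the remainder reduces to carefully tracking reachability across the operations of Observation~\ref{obs:exhibit}.
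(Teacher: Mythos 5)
The paper offers no proof of this Observation---it is stated as an easily proven fact to be used later in the proof of Theorem~\ref{thm:treechild}---so there is no in-paper argument to compare against; your proposal supplies the missing details, and in substance it is correct. Your mechanism is the natural one: each operation of Observation~\ref{obs:exhibit} neither destroys nor creates directed reachability between surviving vertices, and the \emph{uniqueness} hypothesis yields the key structural fact (your Step~(3) auxiliary claim) that every common ancestor of~$x$ and~$y$ in~$N$ lies weakly above~$u$; survival of~$u$ plus the two transfer directions then give both lowestness and uniqueness in~$P$. Note also that this auxiliary claim, applied to $a=LSA(\xyz)$ (which is a common ancestor of~$x$ and~$y$), is what actually justifies your assertion in Step~(1) that $u$, $u_1$, $u_2$ lie on directed paths from $LSA(\xyz)$ to the three leaves; as written, that assertion does not follow from the sentence preceding it.

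Two repairs are needed in Step~(1). First, the sentence ``since $u$ still has outdegree~$2$\dots it cannot be eliminated by\dots the parallel-arc rule'' is a non sequitur: the parallel-arc rule applies \emph{precisely} to outdegree-$2$ vertices, namely when suppression of indegree-$1$ outdegree-$1$ vertices causes the two arcs leaving a vertex to acquire a common head; if that happened at~$u$, the subsequent suppression would leave~$u$ with outdegree~$1$, after which~$u$ itself could be suppressed. The scenario is in fact impossible, but for a different reason, and you already have the tool to see it: if the two arcs out of~$u$ became parallel arcs with common head~$w$, then $w\leq x$ and $w\leq y$ in the current graph, hence (since no operation creates reachability) $w\leq_N x$ and $w\leq_N y$ with~$w$ strictly below~$u$, contradicting that~$u$ is a \emph{lowest} common ancestor in~$N$. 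Second, when ruling out outdegree~$1$ for~$u$ you treat only the case that~$u$ is a reticulation; the paper's definition of a binary network also permits an indegree-$0$ root of outdegree~$1$, but your unique-child argument applies verbatim there. (Minor wording: in both places the contradiction is with~$u$ being \emph{lowest}, not with uniqueness of the LCA.) With these lines added, the proof is complete.
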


We now make a definition that will be crucial for the decomposition theorems in Section~\ref{sec:decomp}
(note that a somewhat related definition appeared in \cite{HabibTo2012}).

\begin{definition} Let~$N$ be a phylogenetic network on~$X$ and~$A\subseteq X$. Then,~$A$ is a \emph{CA-set (Cut-Arc set)} of~$N$ if there exists a cut-arc~$(u,v)$ of~$N$ such that~$A=\{x\in X\mid v\leq_N x\}$.
\end{definition}

For example, the CA-sets of network~$N$ in Figure~\ref{fig:networktrinet}
are $\{d,e\},\{g,h\}$ and all singletons $\{a\},\{b\},\{c\},\{d\},\{e\},\{f\},\{g\},\{h\},\{i\}$. In the next section, we will make use of the 
following easily proven fact
that relates the CA-sets of a network to the CA-sets of its exhibited trinets.

\medskip

\begin{observation}\label{obs:caset}
Let~$N$ be a phylogenetic network on~$X$ and~$P$ the trinet on~$\xyz\subseteq X$ exhibited by~$N$. If~$A\subseteq X$ is a CA-set of~$N$, then~$A\cap\{x,y,z\}$ is a CA-set of~$P$.
\end{observation}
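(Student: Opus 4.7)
The plan is to follow the cut-arc $(u,v)$ of $N$ that witnesses $A$ through the construction of $P$ described in Observation~\ref{obs:exhibit}, and to produce a corresponding cut-arc of $P$ whose below-leaf-set is $B:=A\cap\xyz$. If $B=\emptyset$ the statement is vacuous (the entire subgraph below $v$ is stripped out when $P$ is formed), so I would assume $B\neq\emptyset$.

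Under this assumption, some leaf of $\xyz$ lies below $v$, so both $v$ and $u$ lie on a path from $LSA(\xyz)$ to one of $x,y,z$; hence the arc $(u,v)$ survives the initial step of the construction (removing the leaves outside $\xyz$ and the resulting dangling vertices). It remains a cut-arc of the restricted graph, because deleting vertices and arcs cannot create new undirected paths: any undirected $u$--$v$ path avoiding $(u,v)$ in the restricted graph would already exist in $N$, contradicting the cut-arc hypothesis. I would then track $(u,v)$ through the subsequent suppression steps: when an indegree-1 outdegree-1 vertex $w$ with incident arcs $(a,w),(w,b)$ is suppressed into a single arc $(a,b)$, a direct check shows that $(a,b)$ is a cut-arc in the new graph iff $(a,w)$ (equivalently $(w,b)$) was one in the old graph, so iterating we obtain an arc $(u^{*},v^{*})$ of $P$ that is still a cut-arc, and whose descendants in $\xyz$ are exactly the leaves of $\xyz$ below $v$ in $N$, namely $B$.

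The step I expect to require a little care is verifying that the tracked arc is not destroyed by the suppression of parallel arcs or of strongly redundant biconnected components. This turns out to be automatic: parallel arcs are never cut-arcs, so the tracked arc is never among them; and a cut-arc is by itself a trivial biconnected component, so the tracked arc is never contained inside any nontrivial biconnected component and in particular never sits inside a strongly redundant one. Once this is in place, $(u^{*},v^{*})$ is a cut-arc of $P$ whose below-leaf-set is $B$, so $B$ is a CA-set of $P$, as required.
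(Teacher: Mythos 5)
The paper records this observation as an ``easily proven fact'' and supplies no proof at all, so there is no official argument to compare against; your strategy---tracking the witnessing cut-arc $(u,v)$ through the operations of Observation~\ref{obs:exhibit}---is the natural one, and your treatment of the suppression steps is correct and careful: vertex and arc deletions cannot create new undirected paths, suppressing an indegree-1 outdegree-1 vertex preserves cut-arc status of the merged arc, parallel arcs are never cut-arcs, and a cut-arc is itself a trivial biconnected component and so never sits inside a strongly redundant one.

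There is, however, a genuine gap: you exclude $A\cap\xyz=\emptyset$ but not $A\cap\xyz=\xyz$, and in that case the step ``some leaf of $\xyz$ lies below~$v$, so both $v$ and $u$ lie on a path from $LSA(\xyz)$ to one of $x,y,z$'' is false. If \emph{all} of $x,y,z$ lie below~$v$, then, since $(u,v)$ is a cut-arc, every root-to-leaf path for each of $x,y,z$ must traverse it, so $v$ is a stable ancestor of $\xyz$ and $LSA(\xyz)$ lies at or below~$v$; the arc $(u,v)$ therefore lies at or above $LSA(\xyz)$ and is excised when $P$ is formed, leaving nothing to track. Indeed, the statement read literally fails in this case: let $N$ be a tree on $\{x,y,z,w\}$ whose root has children $w$ and $v$, with the triplet $((x,y),z)$ below $v$; then $\xyz$ is a CA-set of~$N$, but the exhibited trinet $P$ is that triplet, whose CA-sets are $\{x\},\{y\},\{z\},\{x,y\}$ and never $\xyz$ (no exhibited trinet can have a cut-arc with all three leaves below it, as that would yield a stable ancestor of $\xyz$ strictly below its root $LSA(\xyz)$). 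So this case must be excluded as degenerate exactly as you excluded the empty one---which is harmless for the paper, since both applications (in Theorems~\ref{thm:caset} and~\ref{thm:bcc}) involve a leaf of the trinet outside~$A$. A related small point: even in the good case, the survival of $(u,v)$ is not a consequence of $A\cap\xyz\neq\emptyset$ alone; it uses the cut-arc hypothesis (every root-to-$x$ path traverses the arc for $x\in A\cap\xyz$) together with the existence of a leaf of $\xyz$ \emph{not} below~$v$, which forces $v$ to lie strictly below $LSA(\xyz)$. This deserves to be spelled out where you assert that $u$ and $v$ lie on a retained path.
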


Given two phylogenetic networks~$N$ and~$N'$ on~$X$, we write $N=N'$ if there is a graph isomorphism between~$N$ and~$N'$ that preserves leaf labels, i.e. if there exists a bijective function $f:V(N)\rightarrow V(N')$ such that $f(x)=x$ for each leaf~$x$ of~$N$ and such that for every $u,v\in V(N)$ holds that $(u,v)$ is an arc of~$N$ if and only if $(f(u),f(v))$ is an arc of~$N'$.

We use~$Tn(N)$ to denote the set of all trinets exhibited by a phylogenetic network~$N$. A phylogenetic network~$N$ is \emph{encoded} by its set of trinets~$Tn(N)$ if there is no recoverable phylogenetic network~$N'\neq N$ with $Tn(N)=Tn(N')$.

\section{Decomposition theorems for trinets}\label{sec:decomp}

It is well known that any graph can be decomposed 
into its biconnected components. 
We begin by showing that trinets can be used to recover 
this decomposition of binary phylogenetic networks. Note that similar 
results have been proven for triplets in~\cite{simplicityAlgorithmica} and
for quartets in unrooted phylogenetics 
networks \cite{Gambette2009structure}. 

\begin{theorem}\label{thm:caset}
Let~$N$ be a recoverable binary phylogenetic network on~$X$, and~$A\subset X$. Then,~$A$ is a CA-set of~$N$ if and only if~$|A|=1$ or, for all $z\in X\setminus A$ and $x,y\in A$ with $x\neq y$, $\xy$ is a CA-set of the trinet on~$\xyz$ exhibited by~$N$.
\end{theorem}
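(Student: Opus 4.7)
The forward direction follows directly from Observation~\ref{obs:caset}: if $A$ is a CA-set of $N$ and $x,y \in A$, $z \in X \setminus A$, then $A \cap \{x,y,z\} = \{x,y\}$ is a CA-set of the exhibited trinet on $\{x,y,z\}$. For the reverse direction, I will assume $|A|\geq 2$ together with the trinet hypothesis, set $w = LSA(A)$, and establish three facts in turn: (i) $w$ is not a reticulation, hence has indegree at most~$1$; (ii) the set of leaves below $w$ equals $A$; and (iii) if $u$ denotes the parent of $w$ (shown to exist using~(ii)), then $(u,w)$ is a cut-arc. Together these will exhibit $A$ as a CA-set via the cut-arc $(u,w)$.

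Fact~(i) is easy: if $w$ were a reticulation then its unique child would inherit the stable-ancestor property of $w$, contradicting the minimality of $LSA$. For~(ii), I would suppose for contradiction that some leaf $z\in X\setminus A$ lies below $w$; using Observation~\ref{obs:lsa} I pick $x,y\in A$ with $LSA(x,y)=w$, and consider the trinet $T$ on $\{x,y,z\}$ exhibited by $N$. The core sub-claims are: (a) $w$ survives as a vertex $w'$ of $T$, because both children of $w$ must be used on $N$-paths to $\{x,y\}$ (otherwise one child would already be a stable ancestor of $\{x,y\}$, contradicting $LSA(x,y)=w$), so after the restriction in Observation~\ref{obs:exhibit} the image of $w$ still has outdegree~$2$ and escapes every suppression step; (b) $w'$ is the $LSA$ of $\{x,y\}$ in $T$, since all trinet-root-to-$\{x,y\}$ paths lift to $N$-paths through $w$, while any stable ancestor of $\{x,y\}$ strictly below $w'$ in $T$ would lift to one strictly below $w$ in $N$; and (c) $z$ lies below $w'$ in $T$, because the $N$-path from $w$ down to $z$ is preserved by the construction of $T$. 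Together (a)--(c) force any candidate cut-arc head $b$ witnessing $\{x,y\}$ as a CA-set of $T$ to be an ancestor of $w'$ in $T$; composing a path from $b$ through $w'$ to $z$ then shows $z$ lies below $b$, contradicting the CA-set property. Hence no such $z$ exists, and since $A\subsetneq X$ we also conclude that $w$ is not the root, so its unique parent $u$ is defined.

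For~(iii), suppose $(u,w)$ is not a cut-arc. Then there exists a reticulation $r$ with $w\leq_N r$ whose other parent $p$ satisfies $w\not\leq_N p$. Any leaf descendant $y$ of $r$ lies below $w$, so $y\in A$ by~(ii); but the path from the root through $p$ and $r$ down to $y$ avoids $w$ entirely (since $w$ is not an ancestor of $p$, and acyclicity prevents the $r$-to-$y$ portion from revisiting $w$), contradicting $w=LSA(A)$. Therefore $(u,w)$ is a cut-arc whose descendant-leaf set is exactly $A$, and the proof is complete.

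The main obstacle will be step~(ii), and specifically the verification of sub-claims (a)--(c): Observation~\ref{obs:lcatrinet} concerns $LCA$ rather than $LSA$, so I will have to unfold the explicit trinet-exhibition described in Observation~\ref{obs:exhibit} and track carefully which vertices and paths survive the successive suppression of indegree-$1$ outdegree-$1$ vertices, parallel arcs, and strongly redundant biconnected components.
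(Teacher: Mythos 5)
Your proposal is correct, but it is architected quite differently from the paper's proof, and in a way that is arguably cleaner. The paper argues by contradiction with a case split on whether the arc entering $LSA(A)$ is a cut-arc: in the hard case it works inside the nontrivial biconnected component~$B$ containing $LSA(A)$, chooses $x,y\in A$ below distinct cut-arcs leaving~$B$ and $z\notin A$ below a third, invokes the trinet hypothesis a \emph{second} time to get a cut-arc~$a$ of the trinet~$P$, pulls it back to an arc $a'=(u,v)$ of~$N$ with $v=LSA(x,y)$, analyses which clean-up operations of Observation~\ref{obs:exhibit} can destroy biconnectivity, and only then extracts the reticulation~$r$ and leaf~$y'$ that contradict $LSA(x,y)=LSA(A)$. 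You instead prove the claim constructively: the trinet hypothesis enters exactly once, in your step~(ii), to pin down that the leaves below $w=LSA(A)$ are precisely~$A$; your step~(iii) is then a purely structural, trinet-free argument that the arc entering~$w$ is a cut-arc --- essentially the same reticulation-with-external-parent trick as the paper's claim in its hard case, but applied to the arc entering $LSA(A)$ rather than to the pulled-back arc~$a'$. I checked step~(iii) and it is airtight: if some vertex strictly below~$w$ had a parent~$p$ with $w\not\leq_N p$, a leaf~$y'$ below it would lie in~$A$ by~(ii), yet would admit a root-to-$y'$ path through~$p$ avoiding~$w$, contradicting stability; hence no arc crosses into the descendant set of~$w$ except $(u,w)$, which is therefore a cut-arc with CA-set exactly~$A$. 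This buys you a proof with no case split, no correspondence between cut-arcs of~$P$ and arcs of~$N$, and no biconnectivity-preservation analysis of the clean-up operations; the cost is concentrated exactly where you flag it, in sub-claims (a)--(c). These are true and completable by the route you indicate: the key invariant is that the clean-up operations preserve reachability and stable-ancestry among surviving vertices, which rules out the two out-arcs of~$w$ ever becoming parallel and being suppressed (a merge vertex would lift to a stable ancestor of $\{x,y\}$ strictly below~$w$ in~$N$), and rules out~$w$ lying in a suppressed strongly redundant component (its component's single exit arc would likewise yield a stable ancestor below~$w$). Note this burden is roughly dual to, and no heavier than, the paper's own terse assertion that the head~$v$ of~$a'$ is the lowest stable ancestor of~$x$ and~$y$ in~$N$. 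Two small points: dispatch $|A|=1$ explicitly (pendant arcs are cut-arcs), as the paper does in one line; and observe that your argument never actually uses recoverability, which is harmless since it is only a hypothesis.
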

\begin{proof}
Let $A\subset X$. If~$|A|\leq 1$ then the theorem clearly holds. Hence, we assume~$|A|\geq 2$.

To prove the ``only if'' direction, assume that~$A$ is a CA-set of~$N$. Let $z\in X\setminus A$ and $x,y\in A$ with $x\neq y$. There exists a unique trinet~$P$ on~$\xyz$ in~$Tn(N)$. Since~$A$ is a CA-set of~$N$, it follows from Observation~\ref{obs:caset} that $\xy$ is a CA-set of~$P$ and we are done.

It remains to prove the ``if'' direction. Assume that for all $z\in X\setminus A$ and $x,y\in A$ with $x\neq y$, $\xy$ is a CA-set of the trinet on~$\xyz$ exhibited by~$N$. Assume that~$A$ is not a CA-set of~$N$.

First assume that the arc entering $LSA(A)$ is a cut-arc~$a$.  By Observation~\ref{obs:lsa}, there exist~$x,y\in A$ such that $LSA(x,y)=LSA(A)$. Moreover, there exist ${z\in X\setminus A}$ below~$a$ because~$A$ is not a CA-set. Consider the trinet~$P$ on~$\xyz$ exhibited by~$N$. By the definition of ``exhibit'', all paths from $LSA(\xyz)$ and hence from $LSA(x,y)$ to~$x$, $y$ and~$z$ are retained in~$P$. It follows that, $z$ is also below $LSA(x,y)$ in $P$. This means that $\xy$ is not a CA-set of~$P$, which is a contradiction. Hence, there is no cut-arc entering $LSA(A)$, which implies that $LSA(A)$ is in a nontrivial biconnected component.

Now, let~$B$ be the nontrivial biconnected component of~$N$ containing $LSA(A)$. Let~$r_B$ be the root of~$B$. Choose $x,y\in A$ that are below different cut-arcs leaving~$B$ such that $LSA(x,y)=LSA(A)$.
First, we observe that there is no leaf $z\in X\setminus A$ below $LSA(x,y)$, because otherwise we could argue as before that the trinet on~$\xyz$ exhibited by~$N$ does not have~$\xy$ as a CA-set.
Pick $z\in X\setminus A$ arbitrarily below a cut-arc leaving~$B$. Note that neither~$x$ nor~$y$ is below this cut-arc because otherwise~$z$ would be below $LSA(x,y)$.

\begin{figure}
    \centering
    \includegraphics[scale=.7]{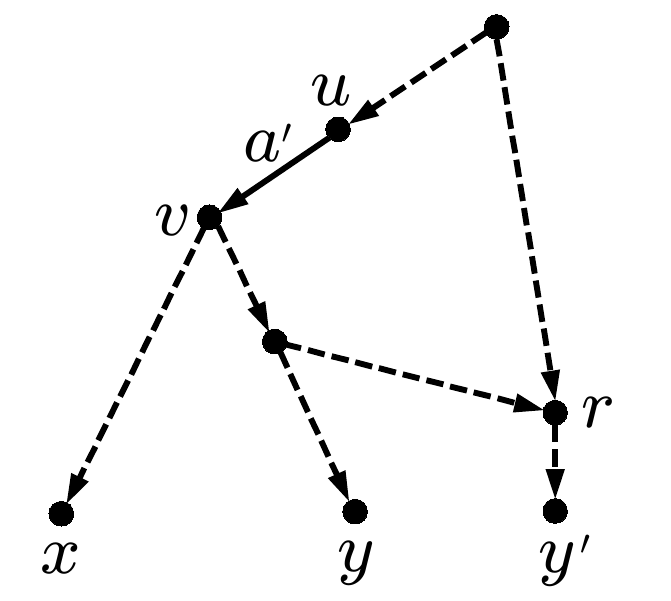}
    \caption{Illustration of network~$N$ in the proof of Theorem~\ref{thm:caset}. Dashed arcs denote directed paths. Arc~$a'$ corresponds to cut-arc~$a$ of trinet~$P$ on~$\xyz$, while~$a'$ is not a cut-arc of~$N$.\label{fig:caset}}
\end{figure}

By assumption, the trinet~$P$ on~$\xyz$ exhibited by~$N$ has $\xy$ as a CA-set. This means that~$P$ has a cut-arc~$a$ such that~$x$ and~$y$ are below~$a$ but~$z$ is not. Consider the arc~$a'=(u,v)$ of~$N$ corresponding to cut-arc~$a$ of~$P$. Observe that~$v$ is the lowest stable ancestor of~$x$ and~$y$ in~$N$, because all paths from $LSA(x,y)$, to~$x$ and~$y$ are retained in~$P$. Also observe that~$a'$ is not a cut-arc in~$N$ because $x,y$ and~$z$ are below three different cut-arcs leaving a biconnected component~$B$. Thus,~$a'$ is some arc of~$B$ and the operations from Observation~\ref{obs:exhibit} that turn~$N$ into~$P$ destroy the biconnectivity of~$B$. Observe that the only one of these operations that does not preserve biconnectivity is the deletion of unlabelled outdegree-0 vertices in case they have indegree greater than~1. We claim that there then exists a reticulation~$r$ in~$N$ with directed paths from~$v$ to~$r$ and from some ancestor of~$u$ to~$r$ not passing through~$a'$ (see Figure~\ref{fig:caset}).

To prove this claim, first note that, since $a'$ is not a cut-arc of~$N$, there is some undirected path~$U$ in~$N$ from~$v$ to~$u$ not passing through~$a'$. Consider the last vertex~$r$ of~$U$ that is below~$v$. Clearly,~$r$ is a reticulation. Let~$p$ be the next vertex on path~$U$. Then~$p$ is not below~$v$. Clearly,~$p$ is below the root and, since~$p$ is not below~$v$, the path from the root to~$p$ does not pass through~$a'$. It follows that~$r$ is a reticulation with directed paths from~$v$ to~$r$ and from some ancestor of~$u$ to~$r$ not passing through~$a'$.

Now, consider any leaf~$y'$ below~$r$. First observe that~$y'\in A$ because no leaves $z\in X\setminus A$ are below $LSA(x,y)$. However, then we obtain a contradiction because $LSA(x,y')$ is closer to the root than $LSA(x,y)=v$.
\end{proof}

Note that we could have used \cite[Lemma~3]{simplicityAlgorithmica}
in the proof of the last result.
However, we presented the above proof since it is 
shorter, self-contained and provides
some insight into how to make arguments using trinets.
We also note that for an arbitrary binary recoverable network~$N$ there is not necessarily a bijection between the cut-arcs of~$N$ and the CA-sets of~$N$ because different cut-arcs might correspond to the same CA-set. However, it is easy to see that the following related observation does hold.

\medskip

\begin{observation}
If~$N$ is a binary phylogenetic network without redundant biconnected components, then there is a bijection between the cut-arcs of~$N$ and the CA-sets of~$N$.
\end{observation}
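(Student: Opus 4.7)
The plan is to show that the map sending each cut-arc $(u,v)$ to its CA-set $\{x \in X : v \leq_N x\}$ is a bijection; surjectivity is immediate from the definition of a CA-set, so the work lies in injectivity. For this, I would argue by contradiction: assume two distinct cut-arcs $a_i = (u_i, v_i)$ (for $i=1,2$) have the same CA-set $A$, and derive that $N$ must contain a redundant biconnected component, contradicting the hypothesis.

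First I would establish that $v_1 \neq v_2$ by observing that no cut-arc of a binary network enters a reticulation: if $r$ is a reticulation with parents $p_1,p_2$, then any two paths from the root to $p_1$ and $p_2$, together with the arcs $(p_1,r)$ and $(p_2,r)$, yield an undirected cycle through each of these arcs, so neither is a bridge. Hence each $v_i$ has indegree exactly $1$, with $(u_i, v_i)$ its unique incoming arc, so $v_1 = v_2$ would force $a_1 = a_2$. Next, for any $x \in A$ the cut-arc property forces every root-to-$x$ path to pass through both $v_1$ and $v_2$, so these vertices are comparable; by symmetry I may assume $v_1$ is a strict ancestor of $v_2$.

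Now I analyze the local structure at $v_1$. Since $v_1$ is not a leaf (as $v_2$ lies strictly below it) and has indegree $1$, in a binary network it has outdegree $2$. If both outgoing arcs of $v_1$ were cut-arcs, the leaves of $A$ would split disjointly between the two children; but only one child of $v_1$ can have $v_2$ below it (since $v_2$ has indegree $1$), forcing the other child to have no leaves below it, which is impossible as every vertex of a phylogenetic network has a leaf-descendant. So $v_1$ is the root of some nontrivial biconnected component $B$, and the outgoing cut-arcs of $B$ partition $A$ into nonempty subsets. Since $v_2$ has a unique parent, it lies strictly below exactly one outgoing cut-arc of $B$, so all leaves below $v_2$, i.e.\ all of $A$, must lie in the CA-set of that single outgoing cut-arc. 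This forces $B$ to have exactly one outgoing cut-arc and therefore to be redundant, a contradiction.

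The main subtlety, and the only part that requires care, is verifying that $v_2$ is not itself a vertex of $B$ (which could confuse the partition argument): if both $u_2$ and $v_2$ belonged to $B$ then maximality of $B$ would place the arc $(u_2, v_2)$ inside $B$, contradicting its being a cut-arc; and if $v_2$ belonged to $B$ but $u_2$ did not, then $v_2$ would have no incoming arc in $B$, violating biconnectivity of $B$ at $v_2$. Beyond this small case check, the proof is a straightforward structural deduction with no deeper machinery required.
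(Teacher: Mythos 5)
Your proof is correct, and there is in fact no proof in the paper to compare it with: the authors state this observation without proof, remarking only that it is ``easy to see''. Your write-up supplies a complete argument of the expected kind --- surjectivity is definitional, and injectivity reduces, via the facts that no cut-arc of a binary network enters a reticulation and that the heads $v_1,v_2$ of two cut-arcs with a common CA-set are comparable, to exhibiting a redundant biconnected component. Two of your parenthetical justifications are, however, weaker than what the corresponding claims actually rest on (the claims themselves are true). First, ``only one child of $v_1$ can have $v_2$ below it (since $v_2$ has indegree $1$)'' does not follow from indegree $1$ alone: a vertex of indegree $1$ can lie below both children of $v_1$ if the two branches remerge at a reticulation above its parent; what rules this out here is that both arcs $(v_1,c_1)$ and $(v_1,c_2)$ are assumed to be cut-arcs, so every root-to-$w$ path for $w$ below $c_i$ must use the arc $(v_1,c_i)$, and no directed path can use both arcs. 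Second, and similarly, ``since $v_2$ has a unique parent, it lies strictly below exactly one outgoing cut-arc of $B$'' really rests on two facts you leave implicit: (i) in a \emph{binary} network every arc leaving a nontrivial biconnected component is a cut-arc, since a vertex lying in two nontrivial biconnected components would need degree at least two in each, hence total degree at least four, which is impossible when all degrees are at most three --- note you also need (i) to pass from ``$B$ has exactly one outgoing cut-arc'' to ``$B$ has only one outgoing arc'', which is what the paper's definition of redundant requires; and (ii) the sets of vertices below distinct outgoing cut-arcs of $B$ are disjoint, because a directed path that leaves $B$ cannot re-enter it: any arc from outside $B$ into a vertex of $B$ must enter the unique source of $B$, namely $v_1$, and a path re-entering at $v_1$ from below would create a directed cycle. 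With these two justifications repaired --- both repairs are routine and in the same spirit as the separation argument you already make for leaves --- the argument is complete.
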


We now turn to showing that, roughly speaking,  a binary network is encoded by its trinets if and only if each of its biconnected components is encoded by its trinets. To this end, let~$N$ be a phylogenetic network and~$B$ a nontrivial biconnected component with~$b$ outgoing cut-arcs $a_1=(u_1,v_1),\ldots ,a_b=(u_b,v_b)$. Consider the phylogenetic network~$N_B$ obtained from~$N$ by deleting all biconnected components except for~$B,a_1,\ldots ,a_b$ and labelling~$v_1,\ldots ,v_b$ by new labels~$y_1,\ldots ,y_b$ that are not in~$X$. We call~$N_B$ a \emph{restriction} of~$N$ to~$B$. Note that~$N_B$ is unique up to the choice of the new labels $y_1,\ldots ,y_b$.

\medskip

\begin{theorem}\label{thm:bcc}
A recoverable binary phylogenetic network~$N$ on~$X$, with~$|X|\geq 3$, is encoded by its trinets~$Tn(N)$ if and only if, for each nontrivial biconnected component~$B$ of~$N$ with at least four outgoing cut-arcs,~$N_B$ is encoded by $Tn(N_B)$.
\end{theorem}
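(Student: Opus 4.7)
The plan is to prove both directions by establishing a trinet-correspondence: for any nontrivial biconnected component $B$ of $N$ with outgoing cut-arcs $a_\ell = (u_\ell, v_\ell)$ and associated CA-sets $A_1, \ldots, A_b$, and any triple $\{x_1, x_2, x_3\} \subseteq X$ with $x_\ell \in A_{i_\ell}$ for distinct indices $i_1, i_2, i_3$, the trinet on $\{x_1, x_2, x_3\}$ exhibited by $N$ coincides, up to relabelling, with the trinet on $\{y_{i_1}, y_{i_2}, y_{i_3}\}$ exhibited by $N_B$. The engine behind this is Observation~\ref{obs:exhibit}: in the exhibit process, the subnetwork of $N$ below each $v_\ell$ that retains exactly one of the three leaves collapses to a single arc, since any surviving biconnected component becomes strongly redundant (its unique outgoing arc has all retained leaves below it) and the remaining suppression rules squash what is left. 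An analogous ``two-CA-set'' variant handles triples that meet only two distinct $A_\ell$: the trinet is then determined by the bitrinet that $N_B$ exhibits on the two corresponding labels in $Y = \{y_1, \ldots, y_b\}$, which is itself recoverable from $Tn(N_B)$ by deleting the third leaf from any suitable trinet.

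For the forward direction I argue the contrapositive. Fix $B$ with $b \geq 4$ outgoing cut-arcs and suppose there exists a recoverable $N_B^\ast \neq N_B$ on $Y$ with $Tn(N_B^\ast) = Tn(N_B)$. Build $N^\ast$ from $N$ by splicing in $N_B^\ast$: remove the vertices and arcs of $B$ together with its outgoing cut-arcs, substitute $N_B^\ast$ in the vacated position (routing the incoming arc from above, if any, into the root of $N_B^\ast$), and identify each leaf $y_\ell$ of $N_B^\ast$ with the head $v_\ell$ of the corresponding deleted cut-arc (dropping the label $y_\ell$). I then verify that (i) $N^\ast$ is a binary phylogenetic network on $X$ distinct from $N$; (ii) $N^\ast$ is recoverable, which reduces to showing that a redundant biconnected component inherited from $N_B^\ast$ cannot become strongly redundant in $N^\ast$, since not all leaves of $X$ can lie below its unique outgoing arc (here $b \geq 4$ and the pairwise disjointness of the $A_\ell$ are used); and (iii) $Tn(N^\ast) = Tn(N)$, by classifying a triple $\{x,y,z\} \subseteq X$ according to how it meets the $A_\ell$: triples avoiding $B$ or contained in a single $A_\ell$ give identical trinets because the relevant portions of $N$ and $N^\ast$ coincide, and triples meeting two or three distinct $A_\ell$ reduce via the correspondence (or its bitrinet variant) to the equality $Tn(N_B) = Tn(N_B^\ast)$. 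This contradicts $N$ being encoded.

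For the reverse direction, let $N' \neq N$ be recoverable with $Tn(N') = Tn(N)$. By Theorem~\ref{thm:caset}, $N$ and $N'$ have the same CA-sets, hence the same cut-arc hierarchy, so their nontrivial biconnected components pair up as $B \leftrightarrow B'$ with matching tuples of outgoing CA-sets $(A_1, \ldots, A_b)$. It suffices to show $B = B'$ (equivalently, $N_B = N'_{B'}$) for every such pair, since then $N$ and $N'$ are obtained by gluing the same biconnected components along the same cut-arc hierarchy, forcing $N = N'$. For $b = 3$, choose representatives $x_\ell \in A_\ell$: the trinet on $\{x_1, x_2, x_3\}$ exhibits all of $B$, because every internal vertex of $B$ lies on a directed path to some $v_\ell$ and so survives the exhibit process, and $Tn(N) = Tn(N')$ then forces $B = B'$. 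For $b = 2$, apply the same idea to a trinet with two leaves from one $A_\ell$ and one from the other, adjoining a leaf outside $B$ if $|A_1 \cup A_2| < 3$. For $b \geq 4$, the correspondence shows $Tn(N_B) = Tn(N'_{B'})$ (each trinet of $N_B$ is the relabelling of some trinet in $Tn(N) = Tn(N')$), and the encoding hypothesis for $N_B$ finishes the argument.

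The main obstacle I anticipate is the correspondence lemma itself, and especially its two-CA-set variant. The one-leaf-per-$A_\ell$ case is conceptually clean because a single retained leaf collapses the subnetwork below $v_\ell$ to an arc. The case with two retained leaves under one $v_\ell$ is more delicate: that subnetwork keeps genuine structure, so one must argue that the remaining freedom in $B$ relative to $B^\ast$ is pinned down by the bitrinet on the two relevant $y$-labels, which in turn must be reconstructed from the full set $Tn(N_B)$. Once this, together with the recoverability of $N^\ast$ in the splicing construction, is established, the rest of the proof is case analysis.
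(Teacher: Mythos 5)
Your ``only if'' direction is essentially the paper's own argument: splice a putative alternative $N_B^\ast$ into $N$ in place of $B$ and verify $Tn(N^\ast)=Tn(N)$ by classifying triples according to how they meet the sets $A_\ell$, recovering the needed binet on two $y$-labels by adjoining a leaf below a third cut-arc. One simplification you miss: the paper first shows that $N_B'$ can have \emph{no} redundant biconnected components at all (since $Tn(N_B')=Tn(N_B)$ and the trinets of $N_B$ have none), so $N_B'$ is a single nontrivial biconnected component with pendant leaves; this makes the splice clean and the recoverability of the spliced network immediate, whereas your step (ii) is fighting a configuration that cannot occur.

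The reverse direction, however, has a genuine gap. The step ``same CA-sets, hence the same cut-arc hierarchy, so their nontrivial biconnected components pair up with matching tuples of outgoing CA-sets'' is false as stated: the paper explicitly observes that there is in general no bijection between cut-arcs and CA-sets of a recoverable network, precisely because such a network may contain redundant (but not strongly redundant) biconnected components, and each such component has a single outgoing cut-arc carrying the \emph{same} CA-set as the cut-arc entering it. Thus $N$ could carry a stack of redundant components above some subnetwork --- even above a single leaf, where the CA-set is a singleton and hence automatically present --- while $N'$ carries a different stack or none; the CA-set family is identical in all these cases, so your pairing $B\leftrightarrow B'$, and the concluding gluing claim ``forcing $N=N'$'', presuppose exactly what must be proved. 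Correspondingly, your case analysis over $b\in\{2,3\}$ and $b\geq 4$ omits $b=1$ entirely, and that omitted case is where the real work lies. The paper's ``if'' direction is instead an induction on $|X|$: it recurses into the subnetworks $N_i$ below the cut-arcs leaving the root component, notes that these $N_i$ need \emph{not} be recoverable, and then argues in two stages: first that the recoverable parts $R_i$ (obtained by suppressing strongly redundant components) agree, with bespoke trinet arguments when $|X_i|\leq 2$; and second that the strongly redundant components of $N_i$ and $N_i'$ agree, in the same order, using trinets on $\xyz$ whose third leaf $z$ is chosen so that $LSA(u_i)\leq_N z$ --- it is this choice that forces the exhibited trinet to \emph{retain} the redundant stack rather than suppress it. Until you supply an argument of this kind for the $b=1$ components, and justify the component pairing via trinets rather than via CA-sets alone, your reverse direction does not go through; your $b=3$ and root-component $b=2$ observations are fine and match the paper's, but they are the easy part.
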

\begin{proof}
To prove the ``only if'' direction of the theorem, suppose that~$N$ is a recoverable binary phylogenetic network on~$X$ that is encoded by its trinets~$Tn(N)$. Consider any nontrivial biconnected component~$B$ of~$N$ with at least four outgoing cut-arcs. For contradiction, suppose that~$N_B$ is not encoded by~$Tn(N_B)$, i.e. there exists a recoverable network~$N_B'\neq N_B$ such that $Tn(N_B)=Tn(N_B')$. By Theorem~\ref{thm:caset},~$N_B'$ has the same CA-sets as~$N_B$. Moreover, since $Tn(N_B)=Tn(N_B')$ and~$N_B$ has no redundant biconnected components, it follows quite easily that~$N_B'$ has no redundant biconnected components. Combining these observations, we see that~$N_B'$ consists of one nontrivial biconnected component with leaves attached to it by cut-arcs. Let~$B'$ be the nontrivial biconnected component of~$N_B'$. Let~$N'$ be the result of replacing~$B$ by~$B'$ in~$N$. We will show that~$Tn(N)=Tn(N')$, which will contradict the fact that~$N$ is encoded by~$Tn(N)$, since~$N'$ is clearly recoverable.

To show that~$Tn(N)=Tn(N')$, let~$P\in Tn(N)$ and let~$x,y$ and~$z$ be the leaves of~$P$. If~$x,y$ and~$z$ are all below different cut-arcs, or all below the same cut-arc leaving~$B$, then clearly $P\in Tn(N')$ since the only difference between~$N$ and~$N'$ is that~$B$ is replaced by~$B'$, and~$Tn(N_B)=Tn(N_B')$. Now suppose that~$P$ contains leaves~$x,y$ that are below the same cut-arc leaving~$B$ and a leaf~$z$ below a different cut-arc leaving~$B$. Then consider a fourth leaf~$q$ that is below a third cut-arc leaving~$B$. Since $Tn(N_B)=Tn(N_B')$, the trinets on~$\{x,z,q\}$ exhibited by~$N$ and~$N'$ are the same. Hence, the binet (phylogenetic network on two leaves) on~$\{x,z\}$ exhibited (defined in the same way as for trinets) by~$N$ and by~$N'$ is the same. Hence, the trinet on~$\xyz$ exhibited by~$N$ and by~$N'$ is the same, and so $P\in Tn(N')$. The case that~$P$ contains one or more leaves that are not below~$B$ can be handled similarly. It therefore easily follows that $Tn(N)=Tn(N')$, as required.

To prove the ``if'' direction, let~$N$ be a recoverable phylogenetic network on~$X$ such that for each nontrivial biconnected component~$B$ with at least four outgoing cut-arcs the network $N_B$ is encoded by $Tn(N_B)$. Let~$N'$ be a recoverable network on~$X$ with~$Tn(N)=Tn(N')$. We will show that~$N=N'$.

First observe that, for a biconnected component~$B$ with precisely~3 outgoing cut-arcs,~$N_B$ is trivially encoded by $Tn(N_B)$, since in that case~$N_B$ is isomorphic to the single trinet in $Tn(N_B)$.

The rest of the proof is by induction on~$|X|$. If~$|X|=3$, then, since~$N$ and~$N'$ are recoverable, they are both equal to the single trinet in~$Tn(N)$ and we are done. Assume $|X|\geq 4$. Consider the root~$\rho$ of~$N$. We shall assume that~$\rho$ is in some nontrivial biconnected component~$B_\rho$
and that~$a_1=(u_1,v_1),\ldots ,a_b=(u_b,v_b)$ are the cut-arcs leaving~$B_\rho$. The case that~$\rho$ is not in a nontrivial biconnected component can be handled in a similar way, with arcs $a_1,\ldots ,a_b$ being the arcs leaving~$\rho$ (and $b=2$ since~$N$ is binary).

Let $N_1,\ldots ,N_b$ be the networks rooted at $v_1,\ldots ,v_b$. More precisely, for $1\leq i\leq b$, let~$N_i$ be the network obtained from~$N$ by deleting all vertices that are not below~$v_i$. Suppose that~$X_i$ is the leaf-set of~$N_i$. Then, since $b\geq 2$, we have $|X_i|<|X|$. Note that~$N_i$ is not necessarily recoverable.

Now, by Theorem~\ref{thm:caset},~$N'$ has the same CA-sets as~$N$. Thus, $X_i$ is a CA-set of~$N'$ for~$i=1,\ldots ,b$. Since the root~$\rho$ of~$N$ is in some nontrivial biconnected component~$B_\rho$, it follows quite easily that also the root~$\rho'$ of~$N'$ is in some nontrivial biconnected component~$B_\rho'$. Let~$a_1'=(u_1',v_1'),\ldots ,a_b'=(u_b',v_b')$ be the cut-arcs leaving~$B_\rho'$. Let $N_1',\ldots ,N_b'$ be the networks rooted at $v_1',\ldots ,v_b'$. Assume without loss of generality that~$N_i'$ is a network on~$X_i$ for $i=1,\ldots ,b$. To show that~$N=N'$, it remains to show that~$N_{B_\rho}=N_{B_\rho'}$ and that~$N_i=N_i'$ for $i=1,\ldots ,b$.

First, we show that $N_{B_\rho}=N_{B_\rho'}$. If~$b\geq 4$, this is true by assumption (because $Tn(N_{B_\rho})=Tn(N_{B_\rho'})$ and by assumption $N_{B_\rho}$ is encoded by $Tn(N_{B_\rho}$). Moreover, $b\geq 2$ since~$N$ is recoverable. For~$b=3$ the statement is trivial. Hence, the only case left is~$b=2$. Consider two leaves~$x,y$ of~$N$ that are below the same cut-arc leaving~$B_\rho$ and a leaf~$z$ that is below the other cut-arc leaving~$B_\rho$. These leaves exist since~$|X|\geq 3$. Consider the trinet~$P$ in~$Tn(N)$ on~$\xyz$. Let~$B_\rho(P)$ be the biconnected component of~$P$ containing the root of~$P$. Then, $N_{B_\rho(P)}=N_{B_\rho}$. Moreover, since~$N'$ also exhibits~$P$, $N_{B_\rho(P)}=N_{B_\rho'}$. It follows that $N_{B_\rho}=N_{B_\rho'}$.

Now, let~$i\in\{1,\ldots ,b\}$. We will show that $N_i=N_i'$. Since~$|X_i|<|X|$, this follows by induction if (a)~$N_i$ and~$N_i'$ are recoverable and~(b) $|X_i|\geq 3$. To show the general case, consider the networks~$R_i$ and~$R_i'$ obtained from~$N_i$ and~$N_i'$ respectively by suppressing all strongly redundant biconnected components. Then,~$R_i$ and~$R_i'$ are recoverable. Hence, if~$|X_i|\geq 3$, $R_i=R_i'$ by induction. If~$|X_i|=1$, then clearly $R_i=R_i'$ because both consist of a single leaf. The only case left is $|X_i|=2$. Consider any leaf~$z\in X\setminus X_i$ and the trinet~$P$ on $X_i\cup\{z\}$. By Observation~\ref{obs:caset}, $X_i$ is a CA-set of~$P$. Let~$P^*$ be the result of deleting all vertices that are not below~$LSA(X_i)$. Then, $P^*=R_i$. Moreover, since~$P$ is exhibited by~$N'$, we also have $P^*=R_i'$. Hence, in all cases, $R_i=R_i'$. So, to complete the proof that $N_i=N_i'$, it remains to show that~$N_i$ and~$N_i'$ have the same strongly redundant biconnected components, in the same order. To prove this, we distinguish the cases $|X_i|=1$ and $|X_i|\geq 2$.

First, suppose $|X_i|=1$, say~$X_i=\{x\}$. Let~$y,z\in X\setminus X_i$ such that $LSA(u_i)\leq_N z$. Consider the trinet~$P$ on~$\xyz$. Let~$a$ be the cut-arc in~$P$ such that~$x$ is below~$a$,~$y$ and~$z$ are not below~$a$ and there is no cut-arc~$a'$ with this property with~$a$ below~$a'$. Consider the network~$P_x$ obtained from~$P$ by deleting all vertices that are not below~$a$. Then, $P_x=N_i=N_i'$.

Now suppose that $|X_i|\geq 2$. Let~$z\in X\setminus X_i$ such that $LSA(u_i)\leq_N z$ and let~$x,y\in X_i$ such that $LSA(x,y)=LSA(X')$ (such~$x,y$ exist by Observation~\ref{obs:lsa}). Consider the trinet~$P$ on~$\xyz$. Consider the cut-arc~$a$ of~$P$ such that~$x$ and~$y$ are below~$a$,~$z$ is not below~$a$ and such that there is no cut-arc~$a'$ with this property with~$a$ below~$a'$. Let $D$ be the directed graph obtained from~$P$ by deleting all vertices that are not below~$a$ and deleting all vertices that are below $LSA(x,y)$. Then, $D$ is isomorphic to the strongly redundant biconnected components of~$N_i$ and of~$N_i'$. Now, since $R_i=R_i'$ and~$N_i$ and~$N_i'$ have the same strongly redundant biconnected components, in the same order, as required.
\end{proof}

\section{Trinets encode level-2 networks}\label{sec:lev2}

In this section we show that binary recoverable level-2 networks are
encoded by their trinets. To do this, we will 
consider each biconnected component of such a network separately,
and will apply some structural results concerning these components
that are presented in \cite{RECOMB2008}. 
Throughout the section, we restrict to binary networks.

\begin{figure}
    \centering
    \includegraphics[scale=.6]{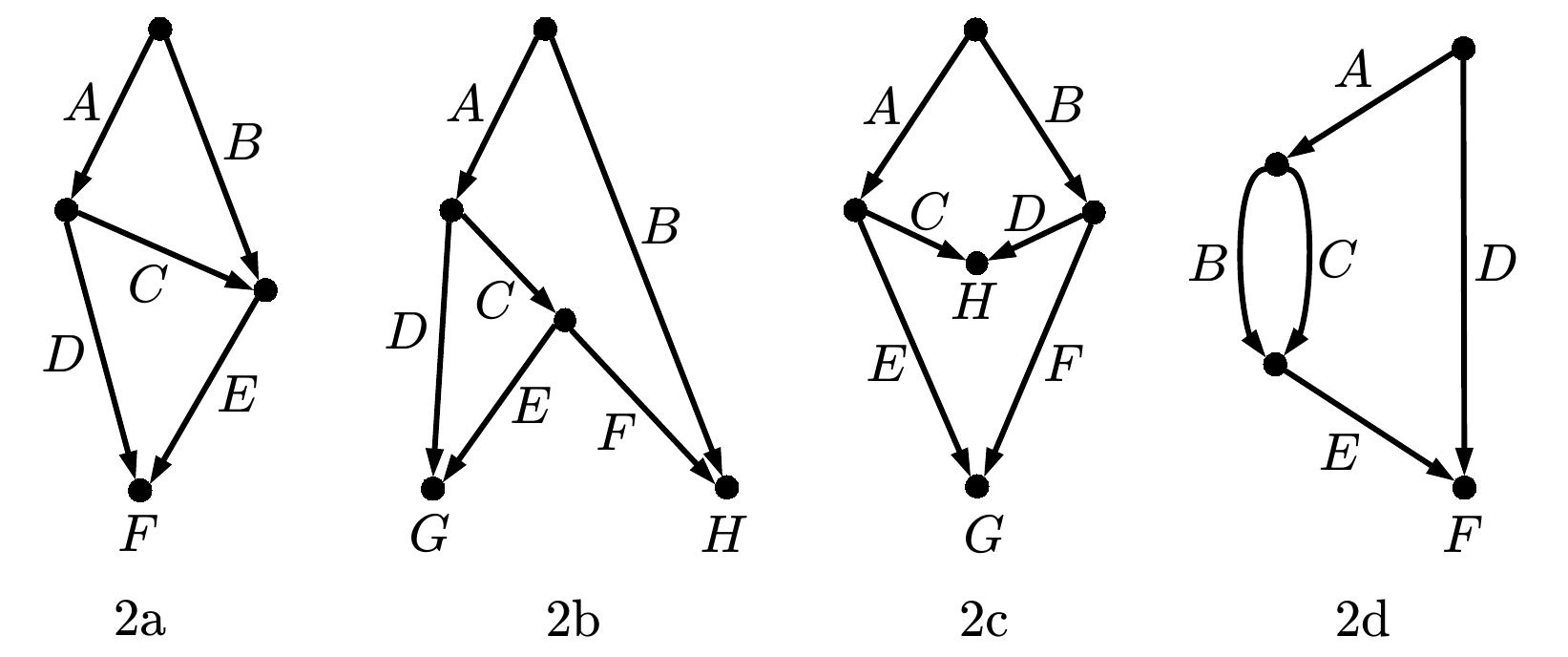}
    \caption{The four level-2 generators. Each side is labelled by a capital letter.\label{fig:generators}}
\end{figure}

We begin by recalling some relevant definitions.
A level-$k$ phylogenetic network is called a \emph{simple level}-$k$ network if it contains one nontrivial biconnected component~$B$ containing exactly~$k$ reticulations and no cut-arcs other than the ones leaving~$B$
(see the left of Figure~\ref{fig:case2b} for an example of a simple level-2 network). A (binary) level-$k$ \emph{generator} is a directed acyclic biconnected multigraph with exactly~$k$ reticulations with indegree~2 and outdegree at most~1, a single vertex with indegree~0 and outdegree~2, and apart from that only vertices with indegree~1 and outdegree~2. The arcs and outdegree-0 vertices of a generator are called its \emph{sides}. For example, all level-2 generators are depicted in Figure~\ref{fig:generators}.

Note that deleting all leaves of a simple level-$k$ network~$N$ gives a level-$k$ generator~$G_N$. We call~$G_N$ the \emph{underlying generator} of~$N$.
Conversely,~$N$ can be reconstructed from~$G_N$ by ``hanging leaves'' from the sides of~$G_N$ as follows (see van Iersel et al.~\cite{RECOMB2008}):
\vspace{-.2cm}
\begin{itemize}
\item for each arc~$a$ of~$G_N$, replace~$a$ by a directed path with $\ell\geq 0$ internal vertices $v_1,\ldots ,v_\ell$ and, for each such internal vertex~$v_i$, add a leaf~$x_i\in X$ and an arc $(v_i,x_i)$; and
\item for each indegree-2 outdegree-0 vertex $v$, add a leaf~$x\in X$ and an arc $(v,x)$.
\end{itemize}
We say that a leaf~$x$ ``is on side'' $s$ if it is hung on side~$s$ in this construction of~$N$ from~$G_N$. More precisely, for a leaf $x\in X$ of a simple level-$k$ network~$N$ with underlying generator~$G_N$ and a side~$s$ of~$G_N$, we say that~$x$ \emph{is on side}~$s$ if~$s$ is an indegree-2 outdegree-0 vertex of~$G_N$ and~$(s,x)$ is an edge of~$N$ or if~$s$ is an edge~$(u,v)$ of~$G_N$ and the parent of~$x$ in~$N$ lies on the directed path from~$u$ to~$v$ in~$N$.

Now, given a level-2 generator~$G$, we call a set of sides of~$G$ a \emph{set of crucial sides} if it contains all vertices with indegree~2 and outdegree~0 together with one arc of each pair of parallel arcs. Consider any simple level-2 network~$N$ on~$X$ with underlying generator~$G$ and a trinet~$P$ on $X'\subseteq X$. We say that~$P$ is a \emph{crucial trinet} of~$N$ if~$X'$ contains at least one leaf on each side in some set of crucial sides of~$G$. For example, Figure~\ref{fig:case2b} depicts a simple level-2 network, one crucial trinet and two non-crucial trinets. The following observation can be verified by inspecting all level-2 generators in Figure~\ref{fig:generators}.

\medskip

\begin{observation}\label{obs:atleastonecrucial}
If~$G$ is a level-2 generator, then it has a set of crucial sides of size at most~2. Hence, every simple level-2 network~$N$ has at least one crucial trinet. Moreover, for every leaf~$x$ of~$N$, there exists a crucial trinet of~$N$ containing~$x$.
\end{observation}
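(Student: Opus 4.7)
The plan is to proceed by inspection of the four level-2 generators displayed in Figure~\ref{fig:generators}, as signalled in the statement. All three assertions of the observation will follow from the same finite case analysis, together with a short argument about how the leaves of $N$ sit on the sides of its underlying generator $G_N$.

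For the first assertion, I would list, for each of the four generators $G$ in Figure~\ref{fig:generators}, its indegree-2 outdegree-0 vertices and its pairs of parallel arcs. A minimal crucial set contains precisely all vertices of the first kind together with exactly one arc from each pair of the second kind, so its size equals the sum of these two counts. A direct case-by-case check of the four pictures shows that in every case this sum is at most $2$, allowing me to exhibit an explicit crucial set of size at most $2$ for each generator.

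For the second assertion, I would fix a simple level-2 network $N$ with underlying generator $G_N$ and pick a crucial set $S$ of sides of $G_N$ with $|S|\le 2$ as above. The main point to verify is that each side $s\in S$ actually carries a leaf of $N$: for an indegree-2 outdegree-0 vertex of $G_N$ this is immediate from the construction of $N$ from $G_N$; for an arc chosen from a pair of parallel arcs, at least one of the two arcs must carry a leaf, since otherwise $N$ itself would contain genuine parallel arcs, which do not arise in a simple level-$k$ network. Picking one leaf on each side of $S$ yields at most two leaves, and adjoining any further leaf of $N$ produces a triple $X'\subseteq X$ whose exhibited trinet is, by definition, a crucial trinet of $N$. (If $|X|<3$ there are no trinets at all and the assertion is vacuous.)

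For the third assertion, I would rerun the same construction with the prescribed leaf $x$: if $x$ already lies on some side in $S$, use $x$ itself as the representative of that side and complete to a triple as before; otherwise take $x$ as the extra third leaf alongside representatives of the sides in $S$. Either way the resulting triple contains $x$ and its exhibited trinet is crucial. The only real obstacle is the finite case analysis in the first step, which is carried out by reading off the reticulations and parallel-arc pairs from each of the four pictures in Figure~\ref{fig:generators}; everything else is routine bookkeeping about which side of $G_N$ a given leaf of $N$ is attached to.
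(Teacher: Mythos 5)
Your proposal is correct and matches the paper's approach: the paper offers no written proof, stating only that the observation ``can be verified by inspecting all level-2 generators in Figure~\ref{fig:generators}'', and your case-by-case count (indegree-2 outdegree-0 vertices plus one arc per parallel pair, totalling at most~$2$ for each of $2a$--$2d$) together with the bookkeeping that every such vertex carries a leaf and at least one arc of each parallel pair must carry a leaf is exactly that verification. The only caveat, which you already flag, is that the second and third assertions implicitly assume $|X|\geq 3$, consistent with how the observation is used in Theorem~\ref{lem:simplelevel2}.
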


Before proving the main result of this section, we also state one other useful fact.
\medskip

\begin{observation}\label{obs:crucial}
Let~$N$ be a simple level-$k$ network,~$G$ its underlying generator and~${P\in Tn(N)}$. Then,~$P$ is a crucial trinet of~$N$ if and only if~$P$ is a simple level-$k$ network. Moreover, if~$P$ is a crucial trinet of~$N$ then~$G$ is its underlying generator.
\end{observation}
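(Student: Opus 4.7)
The plan is to track, operation by operation, what the cleanup rules of Observation~\ref{obs:exhibit} do to the underlying generator $G$ when we exhibit the trinet $P$ on $\{x,y,z\}$ from $N$. The guiding intuition is that the definition of a \emph{crucial side} is engineered exactly so that covering every crucial side prevents any cleanup operation from altering the shape of the nontrivial biconnected component $B$ of $N$.

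For the direction ``crucial $\Rightarrow$ simple level-$k$ with generator $G$'', I would fix a set $\mathcal{S}$ of crucial sides of $G$ covered by $\{x,y,z\}$ and inspect each rule. An indegree-2 outdegree-0 vertex $v$ of $G$ has a unique leaf $x_v$ hanging from it in $N$; since $v\in\mathcal{S}$, we have $x_v\in\{x,y,z\}$, so $v$ is never deleted as an unlabelled outdegree-0 vertex and it survives in $P$ as a reticulation. For each pair of parallel arcs of $G$, at least one arc of the pair is in $\mathcal{S}$ and hence carries a retained leaf, so ``suppress parallel arcs'' is never triggered on that pair. The remaining rules (suppressing indegree-1 outdegree-1 vertices, deleting indegree-0 outdegree-1 vertices, and suppressing strongly redundant biconnected components) only clean up the internal path-decoration of sides; the component $B$ itself cannot become strongly redundant since it still has three distinct outgoing cut-arcs leading to the three distinct retained leaves $x,y,z$. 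Consequently, every reticulation and every generator arc of $G$ appears in the biconnected component of $P$ with exactly the same adjacency pattern, which yields simultaneously that $P$ is a simple level-$k$ network and that its underlying generator coincides with $G$ (handling the ``moreover'' clause at the same time).

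For the reverse direction I would argue contrapositively: if some crucial side of $G$ fails to be covered by $\{x,y,z\}$, then either (i) an indegree-2 outdegree-0 vertex $v$ of $G$ loses its attached leaf and is removed by the first cleanup rule, destroying a reticulation, or (ii) both arcs of some parallel pair have all their internal leaves stripped and are subsequently merged by ``suppress parallel arcs'', again eliminating a reticulation. In either case, the single nontrivial biconnected component of $P$ ends up with strictly fewer than $k$ reticulations, contradicting that $P$ is a simple level-$k$ network. The main obstacle I expect is tightening case (ii): one must check that when both arcs of a parallel pair are uncovered, the pair really persists into $P$ as parallel arcs (so that the suppression rule actually fires) rather than being pre-empted by some other cleanup step. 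A brief case analysis on the four level-2 generators of Figure~\ref{fig:generators} — where any parallel arcs lie entirely inside $B$ between two vertices (the root or a reticulation) that are always preserved by the other rules — settles this routinely.
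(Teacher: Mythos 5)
Your argument is correct in substance, and in fact the paper offers \emph{no} proof of this Observation at all --- like Observation~\ref{obs:atleastonecrucial}, it is presented as a fact to be verified by inspecting the level-2 generators and the cleanup rules, and your operation-by-operation analysis of the rules in Observation~\ref{obs:exhibit} is exactly the intended justification (including the often-overlooked point that the strongly-redundant-component rule, which implements the truncation at $LSA(\xyz)$, cannot fire because the surviving component has three outgoing cut-arcs). Two small remarks. First, a factual slip: in generator~$2d$ the parallel pair runs from a \emph{tree} vertex (the child of the root) to a reticulation, so your parenthetical claim that parallel pairs join vertices that are ``the root or a reticulation'' is inaccurate; it is harmless, though, because all your case~(ii) needs is that the tail survives \emph{until} the merge (it does, since its two outgoing arcs are the pair itself) and that the head then drops to indegree~$1$, destroying a reticulation --- that the tail is itself suppressed afterwards only furthers the collapse. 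Second, your final check of case~(ii) by inspecting the four level-2 generators settles only $k=2$, whereas the statement is phrased for level~$k$; this matches the paper, which defines crucial sides only for level-2 generators and only ever applies the observation there, but if you want the general-$k$ version a uniform argument avoids generator inspection: when both arcs of a parallel pair $(u,v)$ are uncovered, every internal vertex of both corresponding side paths loses its pendant leaf and is suppressed or deleted, so $v$ cannot retain indegree~$2$ in~$P$ no matter in which order the rules fire, and since the cleanup operations never create reticulations the trinet has at most $k-1$ reticulations in total and cannot be a simple level-$k$ network.
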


\medskip

\begin{theorem}\label{lem:simplelevel2}
Every binary, simple level-2 network on~$X$, with~$|X|\geq 3$, is encoded by its trinets.
\end{theorem}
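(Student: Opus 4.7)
The plan is to show that if $N$ and $N'$ are two recoverable binary simple level-2 networks on $X$ with $Tn(N) = Tn(N')$, then $N = N'$. The strategy proceeds in three stages: first identify the common underlying generator, then pin down which side of the generator each leaf sits on, and finally determine the order of leaves along each arc side.

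For the first stage, by Observation~\ref{obs:atleastonecrucial} there exists at least one crucial trinet $P \in Tn(N)$. Observation~\ref{obs:crucial} says that such a $P$ is itself a simple level-2 network whose underlying generator is $G_N$. Since $P \in Tn(N')$ and $P$ is a simple level-2 network, the same observation, applied to $N'$, forces $P$ to be a crucial trinet of $N'$ whose underlying generator equals $G_{N'}$. Hence $G_N = G_{N'} =: G$.

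For the second stage, for each leaf $x \in X$ I invoke the second half of Observation~\ref{obs:atleastonecrucial} to pick a crucial trinet $P_x \in Tn(N)$ containing $x$. Since $P_x$ is crucial in both $N$ and $N'$ and has underlying generator $G$ in each, the side of $G$ on which $x$ is hung can be read directly from the structure of $P_x$, and this side must be the same in $N$ and in $N'$. Sides of $G$ that are indegree-2 outdegree-0 vertices carry at most one leaf, so no further information is needed for those.

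The third and most delicate stage handles arc sides $s=(u,v)$ of $G$ carrying two or more leaves. Given leaves $x_i, x_j$ on $s$, I pick a third leaf $z$ (available since $|X|\geq 3$) and examine the trinet $P$ on $\{x_i,x_j,z\}$ exhibited by $N$. The construction of $P$ via Observation~\ref{obs:exhibit} preserves the directed paths from $LSA(\{x_i,x_j,z\})$ to the three leaves, so the relative order of the parents of $x_i$ and $x_j$ along the image of $s$ inside $P$ matches their order along $s$ in $N$. Since $N'$ exhibits the same $P$ and places $x_i, x_j$ on the same side $s$ of the same generator $G$, the same order must hold in $N'$. Combining the three stages, $N$ and $N'$ have the same generator, the same leaf-to-side assignment, and the same ordering along each side, so $N = N'$.

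The main obstacle I expect is in the third stage: one has to verify, by inspecting the four generators in Figure~\ref{fig:generators} and considering the various positions that $z$ can occupy, that the parent-below-parent relationship between $x_i$ and $x_j$ along $s$ genuinely survives the simplifications in Observation~\ref{obs:exhibit} (deletion of unlabelled outdegree-0 vertices, suppression of indegree-1 outdegree-1 vertices and of parallel arcs, and collapsing of strongly redundant biconnected components). Once this case analysis is in hand, the three stages assemble into the desired encoding result.
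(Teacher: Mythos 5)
Your proposal has two genuine gaps. First, you begin by assuming that the competing network $N'$ is itself a binary simple level-2 network, but the definition of encoding quantifies over \emph{all} recoverable networks on~$X$: a priori $N'$ could be non-binary, non-simple, or of higher level. The paper's proof spends its opening paragraph deriving these properties from $Tn(N')=Tn(N)$ --- level-2-ness because all trinets in the set are level-2, simplicity via Theorem~\ref{thm:caset} (same CA-sets) together with the absence of redundant biconnected components in the trinets, and binarity by observing that a vertex of $N'$ with outdegree greater than~$2$ would force a non-binary trinet. This reduction is missing from your argument and must be supplied before your stage one even makes sense.

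Second, and more seriously, your stage two asserts that the side on which a leaf $x$ is hung ``can be read directly'' from a crucial trinet $P_x$ and ``must be the same in $N$ and in $N'$''. This is false as stated, because two of the four level-2 generators have nontrivial automorphisms: in generator $2c$ the sides $A,C,E$ can be interchanged with $B,D,F$ (and $C,H,D$ with $E,G,F$), and in $2d$ sides $B$ and $C$ can be swapped. A single crucial trinet therefore determines the side of a leaf only up to such an automorphism, and --- worse --- different leaves are certified by \emph{different} crucial trinets, which could a priori embed into $N'$ via different automorphisms; you need an argument that one automorphism works globally. The paper handles exactly this in its cases $2c$ and $2d$: it fixes an anchor (e.g., a leaf $a$ on side $A$ in case $2c$) and uses additional trinets such as those on $\{a,c,y\}$, $\{a,z,y\}$ and $\{a,z,x\}$ --- typically \emph{non-crucial}, simple level-1 trinets --- to propagate a consistent labelling, then checks that any residual global ambiguity is itself an automorphism of $G$ and hence harmless. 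By contrast, your stage three (ordering along an arc side) is comparatively unproblematic: once two leaves are known to lie on the same side in $N'$, the below-relation of their parents survives the reductions of Observation~\ref{obs:exhibit}, since every root-to-leaf path to the lower leaf passes through the parent of the upper one. But it silently relies on the unrepaired stage two, so the symmetry bookkeeping is the key missing idea --- not the order-preservation case analysis you flag as the main obstacle.
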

\begin{proof}
Let~$N$ be any binary, simple level-2 network on~$X$, with~$|X|\geq 3$. Assume that~$Tn(N')=Tn(N)$ for some recoverable network~$N'$. We will show that~$N'=N$.

We begin by showing that~$N'$ is a binary, simple, level-2 network. First, it is a level-2 network because any level-$k$ network with $k>2$ has a level-$k'$ trinet with $k'>2$, but $Tn(N')=Tn(N)$ contains only level-2 trinets. Second, $N'$ is simple network because its set of CA-sets equals the set of CA-sets of~$N$ by Theorem~\ref{thm:caset} and it has no redundant biconnected components because the trinets in $Tn(N')=Tn(N)$ have no redundant biconnected components. Third, $N'$ is binary. Indeed, assume that~$N'$ has a vertex with outdegree greater than~2 and let~$c_1,c_2,c_3$ be three of its children. Then, consider three (not necessarily different) leaves $x_1,x_2$ and~$x_3$ below $c_1,c_2$ and~$c_3$ respectively. Then, any trinet containing $x_1,x_2$ and~$x_3$ exhibited by~$N'$ is not binary, while all trinets in $Tn(N')=Tn(N)$ are binary.

Now, let~$G$ be the underlying generator of~$N$. First, we show that~$G$ is also the underlying generator of~$N'$. By Observation~\ref{obs:atleastonecrucial},~$N$ has at least one crucial trinet~$P_c$. By Observation~\ref{obs:crucial},~$P_c$ is a simple level-2 network and its underlying generator is~$G$. Since $Tn(N)=Tn(N')$, $P_c$ is also a trinet of~$N'$. Moreover,~$P_c$ is a crucial trinet of~$N'$ by Observation~\ref{obs:crucial} because~$N'$ is a simple level-2 network and~$P_c$ is a simple level-2 network. Hence,~$G$ is the underlying generator of~$N'$, again by Observation~\ref{obs:crucial}.

The remainder of the proof is divided into four cases, based on the four level-2 generators~$2a$, $2b$, $2c$ and~$2d$ (see Figure~\ref{fig:generators} for these generators and the labels of their sides).

Case $G=2a$. First, observe that there are no symmetries, i.e. no relabelling of the sides of~$2a$ gives an isomorphic generator. Let~$x$ be the leaf on side~$F$ in~$N$. Since~$x$ is then the leaf on side~$F$ in every crucial trinet of~$N$, and since these crucial trinets are exhibited by~$N'$, and since there are no symmetries, it follows that~$x$ is also the leaf on side~$F$ in~$N'$. Now consider any side~$s\neq F$ of~$N$ and any leaf~$y$ on that side. Consider any crucial trinet~$P_c$ of~$N$ containing~$y$. Then~$y$ is on side~$s$ in~$P_c$ and, since~$P_c$ is exhibited by~$N'$ and there are no symmetries, $y$ is on side~$s$ in~$N'$. Hence, each leaf is on the same side in~$N'$ as it is in~$N$. It remains to show that the leaves on each side are in the same order in~$N$ and~$N'$. Consider a side~$s$ with at least two leaves and two leaves~$y,z$ on that side such that~$z$ is below~$y$. It follows that $z$ is below~$y$ in the crucial trinet on~$\xyz$ and from that it follows that $z$ is below~$y$ in~$N'$. We conclude that~$N'=N$ since both networks have the same underlying generator, the same leaves on each side, and the same order of the leaves on each side.

\begin{figure}
    \centering
    \includegraphics[scale=.7]{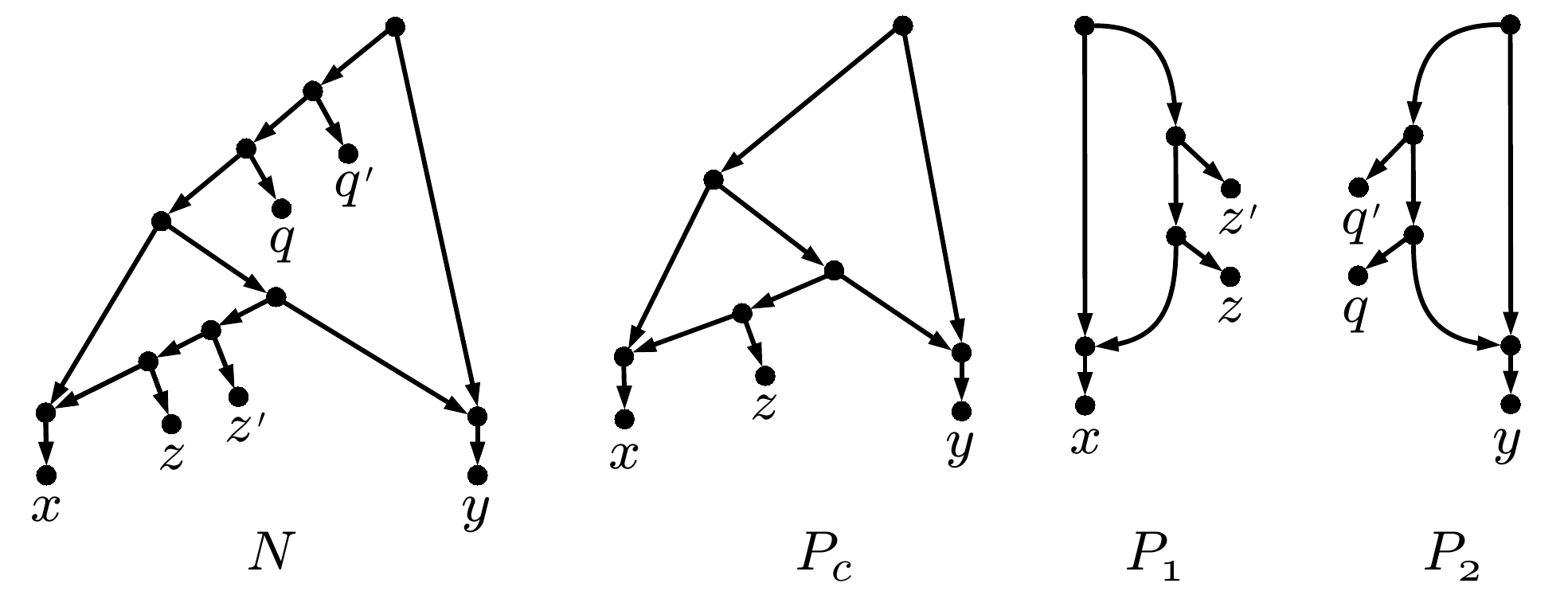}
    \vspace{-.3cm}
    \caption{The underlying generator of simple level-2 network~$N$ is~$2b$ (see Figure~\ref{fig:generators}). Leaf~$x$ is on side~$G$, $y$ is on side~$H$, $z$ and~$z'$ are on side~$E$ and~$q$ and~$q'$ are on side~$A$. Trinet~$P_c$, the trinet on~$\xyz$, is one of the four crucial trinets, which determine the side each leaf is on. Trinet~$P_1$ on~$\{x,z,z'\}$ and trinet~$P_2$ on~$\{y,q,q'\}$, which are non-crucial trinets, determine the order of the leaves on each side. This is Case ``$G=2b$'' in the proof of Lemma~\ref{lem:simplelevel2}.\label{fig:case2b}}
\end{figure}

Case $G=2b$. Again, there are no symmetries. Let~$x$ be the leaf on side~$G$,~$y$ the leaf on side~$H$ and~$z$ a leaf on some other side~$s$ (see Figure~\ref{fig:case2b}). Then, the trinet~$P_c$ on~$\xyz$ is crucial and, since there are no symmetries, it follows that leaves $x,y,z$ are, respectively, on sides~$G,H,s$ in~$P_c$ and hence in~$N'$. Consequently, all leaves are on the same side in~$N'$ as in~$N$. To see that they are in the same order, first consider two leaves~$z,z'$ that are both on side $C,D$ or~$E$ and consider the (non-crucial) trinet~$P_1$ on $\{x,z,z'\}$. Observe that~$P_1$ is a simple level-1 network and that~$z$ and~$z'$ are on the same side of~$P_1$. Moreover, if~$z$ is below~$z'$ in~$N$, then~$z$ is below~$z'$ in~$P_1$, and hence~$z$ is below~$z'$ in~$N'$. Now consider leaves $q,q'$ both on side~$A,B$ or~$F$. Then the trinet~$P_2$ on~$\{y,q,q'\}$ is a simple level-1 network and, as before, if~$q$ is below~$q'$ in~$N$, then~$q$ is below~$q'$ in~$P_2$ and hence in~$N'$. It follows that $N=N'$ as required.

Case $G=2c$. In this case there is some symmetry since sides~$A,C$ and~$E$ can be interchanged with~$B,D$ and~$F$, respectively, to obtain an isomorphic generator. Similarly, sides $C,H,D$ can be interchanged with~$E,G,F$, respectively, again yielding an isomorphic generator. Let~$x$ be on side~$G$, $y$ on side~$H$ and~$z$ on some other side~$s$ in~$N$. Then, the crucial trinet~$P_c$ on~$\xyz$ implies that $x$ and~$y$ are on side~$G$ and~$H$ in~$N'$. Assume without loss of generality that~$x$ is on side~$G$ and~$y$ on side~$H$ in~$N'$. Then, again using trinet~$P_c$, it follows that~$z$ is on side~$A$ or~$B$ in~$N'$ if it is on side~$A$ or~$B$ in~$N$. Similarly,~$z$ is on side~$C$ or~$D$ in~$N'$ if it is on side~$C$ or~$D$ in~$N$ and~$z$ is on side~$E$ or~$F$ in~$N'$ if it is on side~$E$ or~$F$ in~$N$. 

Now, consider two leaves~$z,z'$ that are both on side~$A,B,C$ or~$D$. In view of the trinet on $\{y,z,z'\}$, $z$ and~$z'$ are on the same side of~$N'$ and in the same order. Similarly, for two leaves~$z,z'$ that are both on side~$E$ or~$F$. Also, the trinet on $\{x,z,z'\}$ implies that $z$ and~$z'$ are on the same side of~$N'$ and in the same order. Thus, leaves that are on the same side in~$N$ are on the same side in~$N'$ and in the same order. First assume that there is at least one leaf on side~$A$ in~$N$ and that the leaves that are on side~$A$ in~$N$ are on side~$A$ in~$N'$. Let~$a$ be one such leaf on side~$A$. Then, any leaf~$c$ that is on side~$C$ in~$N$ is on side~$C$ in~$N'$ by the trinet on~$\{a,c,y\}$ (because~$a$ and~$c$ are on the same side of this trinet, which is a simple level-1 network). Similarly, for leaf~$z$ on side $s\in\{B,D\}$ in~$N$ holds that~$z$ is on side~$s$ in~$N'$ by the trinet on $\{a,z,y\}$ and for leaf~$z$ on side $s\in\{E,F\}$ in~$N$ holds that~$z$ is on side~$s$ in~$N'$ by the trinet on $\{a,z,x\}$. It follows that $N=N'$ because all leaves are on the same side, in the same order. Now assume that the leaves that are on side~$A$ in~$N$ are not on side~$A$ in~$N'$. Then these leaves are on side~$B$ in~$N'$. Then we can argue in exactly the same way that the leaves that are on sides $B,C,D,E,F$ in~$N$ are on sides $A,D,C,F,E$ in~$N'$. Hence, again $N=N'$ by relabelling the sides appropriately. Finally, if there is no leaf on side~$A$, then there is a leaf on one of the sides $B,C,D,E,F$ (since $|X|\geq 3$) and we can apply similar arguments based on that leaf.

Case $G=2d$. In this case, the only symmetry is that sides~$B$ and~$C$ can be interchanged with~$C$ and~$B$, respectively. Let~$x$ be the leaf on side~$F$, $y$ a leaf on side~$B$ or~$C$ and~$z$ a leaf on some side~$s\in\{A,B,C,D,E\}$ in~$N$. Note that there exists at least one such leaf~$z$ since $|X|\geq 3$. Then, by the crucial trinet on~$\xyz$, $x$ is on side~$F$ and $y$ is on side~$B$ or~$C$. Without loss of generality,~$y$ is on the same side in~$N'$ as in~$N$. So it follows that~$z$ is on side~$s$ in~$N'$. Hence, without loss of generality (i.e. by relabelling sides~$B$ and~$C$ if necessary), each leaf is on the same side in~$N'$ as in~$N$. Now consider two leaves~$z,z'$ that are on the same side of~$N$. Then the trinet on $\{x,z,z'\}$ implies that the order of~$z$ and~$z'$ is the same in~$N'$ as in~$N$. We can conclude that~$N'=N$, since (after possibly relabelling sides~$B$ and~$C$) both networks have the same leaves on the same sides in the same order.
\end{proof}

\medskip

\begin{corollary}\label{enc-lev2}
Every binary recoverable level-2 network~$N$ on~$X$, with~$|X|\geq 3$, is encoded by its set of trinets~$Tn(N)$.
\end{corollary}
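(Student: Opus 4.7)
My approach is to combine the biconnected-component decomposition of Theorem~\ref{thm:bcc} with the encoding result for simple level-2 networks already established in Theorem~\ref{lem:simplelevel2}, together with the known encoding result for level-1 networks from \cite{huber2011encoding}.

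First, I would invoke Theorem~\ref{thm:bcc}, which reduces the corollary to proving that, for each nontrivial biconnected component $B$ of $N$ with at least four outgoing cut-arcs, the restriction $N_B$ is encoded by $Tn(N_B)$.

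Next, I would verify that every such $N_B$ is a binary simple level-$k'$ network with $k' \in \{1,2\}$. Indeed, since $N$ is binary and level-2, the reticulation number of $B$ is at most $2$ (and at least $1$, since a nontrivial biconnected component must contain a reticulation). By construction, $B$ is the only nontrivial biconnected component of $N_B$, and all remaining arcs of $N_B$ are cut-arcs attaching the new leaves $y_1,\dots,y_b$, so $N_B$ is simple. Moreover, since $B$ has at least four outgoing cut-arcs in $N_B$, it has more than one outgoing arc and is therefore not strongly redundant, so $N_B$ is recoverable. The leaf-set of $N_B$ also has size $b \geq 4 \geq 3$, meeting the size hypothesis of the relevant encoding results.

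Finally, I would apply Theorem~\ref{lem:simplelevel2} when $k'=2$ and the Huber--Moulton encoding result for level-1 networks when $k'=1$, concluding in either case that $N_B$ is encoded by $Tn(N_B)$. Feeding this back into Theorem~\ref{thm:bcc} then delivers the corollary. Strictly speaking there is no real obstacle to overcome here: all of the substantive work, namely the case analysis over the four level-2 generators of Figure~\ref{fig:generators} and the decomposition machinery of Section~\ref{sec:decomp}, has already been carried out. The only care required is in checking that each $N_B$ satisfies the hypotheses of the simple-network encoding results, and this follows routinely from $N$ being binary, recoverable, and level-2, together with the assumption that $B$ has at least four outgoing cut-arcs.
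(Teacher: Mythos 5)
Your proposal is correct and matches the paper's own proof, which simply cites Theorem~\ref{thm:bcc}, Theorem~\ref{lem:simplelevel2} and the level-1 encoding result of \cite{huber2011encoding}; your extra verifications that each $N_B$ is a binary, recoverable, simple level-$k'$ network with $k'\in\{1,2\}$ and at least three leaves are exactly the routine checks the paper leaves implicit.
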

\begin{proof}
Follows from Theorem~\ref{thm:bcc}, Lemma~\ref{lem:simplelevel2} and the fact that level-1 networks are encoded by their trinets~\cite{huber2011encoding}.
\end{proof}

\section{Trinets encode tree-child networks}\label{sec:treechild}

In this section we show that 
tree-child networks are encoded by 
their trinets. We begin by presenting a
definition and some observations.
A directed path in a network is called a \emph{tree path} if it 
does not contain any reticulations apart from possibly 
its first vertex. It is easily seen that from every 
vertex of a tree-child network there is a directed 
tree path that ends at some leaf. 

\medskip

\begin{observation}\label{obs:lcaunique}
Suppose that a network~$N$ has an arc~$(u,v)$ such that~$v$ is a reticulation and such that there is no directed path from~$u$ to the other parent of~$v$. Suppose that there are tree paths from~$u$ to a leaf~$x$ and from~$v$ to a leaf~$y$. Then,~$x$ and~$y$ are distinct and~$u$ is their unique lowest common ancestor in~$N$.
\end{observation}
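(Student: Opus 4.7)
The plan is to prove both assertions by exploiting a key property of tree paths: if $u = u_0 \to u_1 \to \cdots \to u_k = x$ is the tree path from $u$ to $x$, each $u_i$ with $i \geq 1$ has indegree one, so a backward induction from $x$ along its unique parent at each step shows that every ancestor of $x$ in $N$ is either a strict ancestor of $u$, equal to $u$, or equal to some $u_i$ with $i \geq 1$. Analogously, using the tree path $v = v_0 \to v_1 \to \cdots \to v_m = y$ from $v$ to $y$, every ancestor of $y$ is a strict ancestor of $v$, equal to $v$, or equal to some $v_j$ with $j \geq 1$.

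To show $x \neq y$, I would argue by contradiction. If $x = y$ then the tree path from $v$ to $y = x$ is a directed path ending at $x$, and by the dichotomy above it must pass through $u$. Concatenating the resulting directed path from $v$ to $u$ with the arc $(u,v)$ then produces a directed cycle in $N$, contradicting acyclicity.

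For the second claim, note first that $u$ is a common ancestor of $x$ along its tree path and of $y$ via $u \to v \to \cdots \to y$. Let $w$ be any common ancestor of $x$ and $y$. The same cycle argument shows that $w$ cannot coincide with $v$ or lie strictly after $v$ on the tree path to $y$, since otherwise $w$ would be a descendant of $v$ with a directed path to $x$, which by the dichotomy would force $v$ to reach $u$. Hence $w$ is a strict ancestor of $v$, and it remains only to exclude $w = u_i$ for some $i \geq 1$. In that case $w$ is an ancestor of $v$, so there is a directed path from $w$ to $v$ that enters $v$ through one of its two parents. If it enters through $u$, then $u$ is simultaneously an ancestor and a descendant of $w = u_i$, forcing $w = u$ and contradicting $i \geq 1$. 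If instead it enters through the other parent $u'$, then prepending the tree-path prefix $u \to u_1 \to \cdots \to u_i = w$ yields a directed path from $u$ to $u'$, contradicting the hypothesis. Hence every common ancestor $w$ equals $u$ or is a strict ancestor of $u$, so $u$ is the unique lowest common ancestor of $x$ and $y$.

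The main obstacle is the final subcase above: ruling out that $w$ sits strictly below $u$ on the tree path from $u$ to $x$ is precisely where the hypothesis forbidding a directed path from $u$ to the other parent of $v$ is indispensable, since without it $w$ could reach $v$ via $u'$ and remain a perfectly legitimate common ancestor strictly below $u$.
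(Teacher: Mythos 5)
Your proof is correct and complete: the indegree-one dichotomy along tree paths, combined with acyclicity and the hypothesis forbidding a directed path from $u$ to the other parent of $v$, establishes both that $x\neq y$ and that every common ancestor of $x$ and $y$ is either $u$ itself or a strict ancestor of $u$, which under the paper's definition makes $u$ the unique lowest common ancestor. The paper states this result as an observation with no proof at all, and your argument supplies exactly the routine verification the authors leave to the reader; the only micro-step worth making explicit is that in the $x=y$ case the tree path from $v$ cannot begin at some $u_i$ with $i\geq 1$ (so it must indeed pass through $u$), which is immediate from your own opening remark since $v$ is a reticulation while each such $u_i$ has indegree one.
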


\begin{observation}\label{obs:retictrinet}
Suppose that a network~$N$ contains a tree path from a reticulation~$r$ to a leaf~$x$. Then,~$r$ is the only reticulation with a tree path to~$x$. Moreover, suppose that~$p_1$ and~$p_2$ are the parents of~$r$ and that there is a directed path from~$p_1$ to~$p_2$ and a tree-path from~$p_1$ to a leaf~$y$. In addition, suppose that~$P$ is a trinet exhibited by~$N$ that contains~$x$ and~$y$. Then,~$P$ contains~$r$ and a tree path from~$r$ to~$x$.
\end{observation}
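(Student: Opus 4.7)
My plan is to prove the two assertions in turn. For the first, I would trace upward from~$x$ in~$N$: since~$x$ and every non-reticulation vertex have indegree~$1$, moving repeatedly from a vertex to its unique parent is well-defined until a reticulation is first encountered. The tree path $r, t_1, \ldots, t_k, x$ forces this upward trace to pass through $t_k, \ldots, t_1$ and hit~$r$ as the first reticulation. Any other reticulation~$r'$ admitting a tree path to~$x$ would, by the same argument, be the first reticulation on this unique trace, so $r' = r$.

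For the second assertion, write~$z$ for the third leaf of~$P$. I would first record two structural observations about~$N$. By the upward-trace argument above, every root-to-$x$ path in~$N$ passes through~$r$, so $LSA(\{x,y,z\})$ is equal to or above~$r$; and the tree path from~$p_1$ to~$y$ avoids~$r$, producing a root-to-$y$ path not using~$r$, so $LSA(\{x,y,z\}) \neq r$, which forces $LSA(\{x,y,z\})$ to be equal to or above~$p_1$. Also, the directed path from~$p_1$ to~$p_2$ cannot start with the arc $(p_1, r)$ (this would close a directed cycle with $(p_2, r)$), and the tree path from~$p_1$ to~$y$ cannot start with $(p_1, r)$ either (since~$r$ would then be an internal reticulation on a tree path); hence both paths begin with the arc $(p_1, c)$, where~$c$ is the non-$r$ child of~$p_1$.

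Next, I would introduce the \emph{divergence vertex}~$d$, defined as the last common vertex of the tree path from~$p_1$ to~$y$ and the directed path from~$p_1$ to~$p_2$. Then~$d$ has outdegree~$2$ in~$N$, with one outgoing arc continuing toward~$y$ along the tree path and the other continuing toward~$p_2$. Moreover $d \neq p_1$ (since both paths share the arc $(p_1, c)$) and $d \neq p_2$ (since~$p_2$ is a reticulation while~$d$ is a non-first vertex of a tree path, hence a non-reticulation). Using Observation~\ref{obs:exhibit}, each of $r, p_1, p_2, d$ lies on a path from $LSA(\{x,y,z\})$ to one of $x, y, z$ and so survives the vertex-deletion step; in the deleted subgraph, both outgoing arcs of~$d$ remain (they lie on paths to~$y$ and to~$x$ respectively), so~$d$ has outdegree~$2$ and is not suppressed.

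Finally, I would track how the two arcs entering~$r$ transform under the suppression operations. The direct arc $(p_1, r)$ is unaffected. The arc $(p_2, r)$, after the possible suppression of~$p_2$ and of any intermediate vertex on the $d$-to-$p_2$ subpath that ends up with outdegree~$1$, is replaced by an arc $(q, r)$ where~$q$ is either~$d$ or a surviving vertex strictly between~$d$ and~$p_2$ on that subpath; in particular $q \neq p_1$, so the two arcs into~$r$ do not become parallel and are not merged. Therefore~$r$ keeps indegree~$2$ and outdegree~$1$ in~$P$, and is neither deleted nor suppressed. The original tree path $r, t_1, \ldots, t_k, x$ becomes the sub-path in~$P$ on the surviving~$t_i$, each of which retains indegree~$1$ because no operation creates new incoming arcs at these tree vertices; this sub-path is the desired tree path from~$r$ to~$x$ in~$P$. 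The main obstacle I foresee is precisely this indegree bookkeeping at~$r$: identifying~$d$ and showing it survives with outdegree~$2$ is what prevents the $(p_2,r)$-route from collapsing into a duplicate of $(p_1,r)$ and thereby destroying the reticulation.
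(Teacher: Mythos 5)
Your proof is correct and follows exactly the route the paper has in mind: the paper states this observation without proof, adding only the remark that the presence of~$y$ in~$P$ is what prevents the two arcs entering~$r$ from becoming parallel, and your divergence vertex~$d$ (which survives with outdegree~$2$ because one of its out-arcs is needed to reach~$y$ and the other to reach~$x$ via~$p_2$ and~$r$) is precisely the rigorous form of that remark, as is your first-reticulation-on-the-upward-trace argument for uniqueness.

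One small slip: your justification that $d\neq p_2$ is unfounded, since nothing in the hypotheses forces~$p_2$ to be a reticulation --- $p_2$ may well be a tree vertex whose unique parent lies on the directed path from~$p_1$ to~$p_2$, in which case that path can coincide with an initial segment of the tree path and $d=p_2$ can occur. Fortunately the claim is dispensable: if $d=p_2$, the second surviving out-arc of~$d$ is the arc $(p_2,r)$ itself, your tail-tracking argument applies with $q=d$, and since~$d$ is a strict descendant of~$p_1$ (both paths leave~$p_1$ through its non-$r$ child~$c$, by acyclicity), you still get $q\neq p_1$, so the two arcs into~$r$ never become parallel and~$r$ survives in~$P$ with the tree path to~$x$, exactly as needed.
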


Notice that, in Observation~\ref{obs:retictrinet}, the presence of leaf~$y$ in trinet~$P$ ensures that, in the process of obtaining~$P$ from~$N$, the incoming arcs of~$r$ do not become parallel arcs, which would have to be suppressed. We are now ready to prove the main result of this section.

\begin{theorem}\label{thm:treechild}
Every binary tree-child network~$N$ on~$X$, with~$|X|\geq 3$, is encoded by its set of trinets~$Tn(N)$.
\end{theorem}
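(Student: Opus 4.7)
The plan is to combine Theorem~\ref{thm:bcc} with an induction on the reticulation number~$r(N)$. By Theorem~\ref{thm:bcc} it suffices to show that for every nontrivial biconnected component~$B$ of~$N$ with at least four outgoing cut-arcs, the restriction~$N_B$ is encoded by~$Tn(N_B)$; each such~$N_B$ is itself a simple tree-child network. The task thus reduces to proving: every simple binary tree-child network on~$X$ with $|X|\geq 3$ is encoded by its trinets.

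Fix such a simple~$N$ and let~$N'$ be any recoverable network with $Tn(N)=Tn(N')$. By Theorem~\ref{thm:caset}, $N$ and~$N'$ share the same CA-sets, all of which are singletons; arguments analogous to those in the proof of Theorem~\ref{thm:bcc} also rule out any redundant biconnected components in~$N'$, so~$N'$ is itself a simple network. I would then induct on~$r(N)$. The base case~$r(N)=0$ is a binary tree, whose trinets determine its triplets, and triplets encode trees by the classical tree-encoding result.

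For the inductive step~$r(N)\geq 1$, the idea is to locate a reticulation~$r$ of~$N$ together with one of its incoming arcs in a canonical way, delete that arc in both~$N$ and~$N'$, and obtain tree-child networks~$M, M'$ with $r(M)=r(N)-1$ and $Tn(M)=Tn(M')$ to which induction applies. I would choose~$r$ to be a lowest reticulation of~$N$ and let~$x$ be the leaf at the end of the tree path from~$r$; by Observation~\ref{obs:retictrinet}, $x$ uniquely identifies~$r$. Letting~$p_1, p_2$ denote the parents of~$r$ and picking tree-path leaves~$y_1, y_2$ below~$p_1, p_2$, I would invoke Observation~\ref{obs:retictrinet} when the parents are comparable and Observation~\ref{obs:lcaunique} together with Observation~\ref{obs:lcatrinet} when they are not, in order to pin down~$p_1, p_2$ via the LCA and reticulation structure of the trinet on~$\{x,y_1,y_2\}$. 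Since $Tn(N)=Tn(N')$, the same local structure must appear in~$N'$, identifying a corresponding reticulation with corresponding parents there. Deleting the corresponding incoming arc in each network and cleaning up by suppressing indegree-1 outdegree-1 vertices yields~$M$ and~$M'$; the lowest-reticulation choice ensures both remain tree-child networks on~$X$ with one fewer reticulation, so the inductive hypothesis delivers $M=M'$ and hence $N=N'$.

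The main obstacle will be (i) making the choice of reticulation and incoming arc fully canonical with respect to the trinets, so that the ``same'' arc is deleted in~$N$ and~$N'$, and (ii) verifying $Tn(M)=Tn(M')$ after the deletion. Point~(ii) requires a careful trinet-by-trinet analysis: each trinet of~$M$ on a triple $\{a,b,c\}\subseteq X$ should be obtainable from the trinet of~$N$ on the same triple by an identifiable local modification at the image of~$r$, so that the same modification applied to the corresponding trinet of~$N'$ yields the same result. The characterization of~$r$ through its tree-path leaf~$x$ (Observation~\ref{obs:retictrinet}) and through the unique-LCA property of its parents (Observations~\ref{obs:lcaunique} and~\ref{obs:lcatrinet}) is what makes this local modification intrinsic to the trinet data and thus coordinatable between~$N$ and~$N'$.
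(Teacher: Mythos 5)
Your proposal has a genuine gap at its crux, namely the coordinated arc deletion and the claim $Tn(M)=Tn(M')$. First, to delete ``the corresponding'' incoming arc in~$N'$ you must already know that~$N'$ contains a reticulation whose two parents sit in the same positions as~$p_1,p_2$ do in~$N$. A single trinet on~$\{x,y_1,y_2\}$ only certifies a three-leaf pattern; it does not determine where~$p_1$ attaches in~$N'$ (it does not even tell you a priori that~$N'$ is tree-child, or that the parent of~$x$ in~$N'$ is a reticulation). Establishing that attachment is precisely the hard content of the theorem near~$r$, so invoking it to set up the deletion is circular. Second, even granting a coordinated deletion, arc deletion does not interact cleanly with trinet exhibition: since~$r$ can have several leaves below it besides~$x$ (a lowest reticulation only forbids reticulations below~$r$, not branching), deleting~$(p_1,r)$ changes the trinets on triples \emph{not} containing~$x$, e.g.\ triples with two leaves below~$r$ and one elsewhere. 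The ``identifiable local modification at the image of~$r$'' is not available inside such a trinet, because the image of~$r$ may have been suppressed or merged during the exhibit operations, and the LSA of the triple may move. So $Tn(M)=Tn(M')$ cannot be verified trinet-by-trinet from $Tn(N)=Tn(N')$ alone; it presupposes exactly the local agreement between~$N$ and~$N'$ that the induction was supposed to deliver. (There are also repairable issues: your base case $r(N)=0$ appeals to triplet-encoding of trees, but encoding here quantifies over all recoverable \emph{networks}~$N'$, so you must separately argue that a recoverable network all of whose trinets are trees is a tree; and after deletion~$M$ is generally not simple, so the induction hypothesis must range over all tree-child networks, with Theorem~\ref{thm:bcc} reapplied inside the induction.)

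The paper avoids this trap by deleting a \emph{leaf} rather than an arc, and by inducting on the level~$k$ (base case $k=1$ from the level-1 encoding result) rather than on~$r(N)$. It chooses~$x$ to be a leaf at maximum distance from the root in the simple network~$N$, proves that the parent of~$x$ is then a reticulation~$r$ with no arc between its two parents, and sets~$N^*$ equal to~$N$ with~$x$ removed and cleaned up. The decisive advantage is that leaf deletion restricts trinet sets exactly: $Tn(N^*)$ is~$\cT$ with all trinets containing~$x$ removed, so the same holds for~$N'$ minus~$x$, and induction (since~$N^*$ is tree-child of level $k-1$) immediately gives that~$N$ and~$N'$ agree off~$x$. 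The only remaining task is to pin down the attachment points~$p_1,p_2$ of~$x$ inside the now-known network~$N^*$, and this is where the paper deploys the tree-path and LCA observations you cite (Observations~\ref{obs:lcaunique}, \ref{obs:retictrinet} and~\ref{obs:lcatrinet}), in a case analysis on whether the parent of~$p_1$ is a reticulation, has outdegree~2 (with three subcases), or~$p_1$ is the root. Your toolkit is the right one, but it should be used \emph{after} the networks have been forced to agree away from~$x$, not to coordinate a structural surgery between two networks whose agreement is still unknown; replacing your arc deletion with the paper's maximum-distance leaf deletion repairs the argument along these lines.
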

\begin{proof}
The proof is by induction on the level~$k$ of the network. The induction basis for~$k=1$ has been shown in~\cite{huber2011encoding}. 

Let~$k\geq 2$ and assume that every binary, tree-child, level-$(k-1)$ network with at least three leaves is encoded by its trinets. Let~$N$ be a binary tree-child level-$k$ network on~$X$, with~$|X|\geq 3$, and let $\cT=Tn(N)$. If~$|X|=3$, the theorem is obviously true, hence we can assume $|X|\geq 4$. By Theorem~\ref{thm:bcc}, we may assume that~$N$ is a simple level-$k$ network, i.e. it has a single nontrivial biconnected component~$B$ and no cut-arcs except for the ones leaving~$B$. Consequently, for each cut-arc~$(u,v)$ of $N$, the vertex~$v$ is a leaf. 

Now, let~$N'$ be any recoverable network on~$X$ exhibiting~$\cT$. We will show that~$N'=N$. Suppose that $x$ is a leaf of~$N$ at maximum distance from the root. 

We first claim that the parent of~$x$ is a reticulation~$r$ such that there is no arc between the parents of~$r$. To see this, first assume that~$r$ is not a reticulation. Then it has some other child~$s$, which must be a leaf because otherwise there would be a tree-path from~$s$ to some leaf at greater distance from the root than~$x$. However, in that case, the arc entering~$r$ would be a cut-arc, which is not possible because~$N$ is a simple level-$k$ network. Hence,~$r$ is a reticulation. Now let~$p_1$ and~$p_2$ be the parents of~$r$ and assume that there is an arc $(p_1,p_2)$. Since~$p_2$ is not a reticulation by the tree-child property, it has a second child~$s$. Observe that, by the tree-child property,~$s$ cannot be a reticulation. Moreover, if~$s$ is not a leaf then it has two children, which must be leaves because~$x$ has maximum distance from the root. But, this is not possible because then the arc entering~$s$ would be a cut-arc. Hence,~$s$ is a leaf. However, then there are only two leaves below~$p_1$. Since ${|X|\geq 4}$, there must be some arc entering~$p_1$ and this is a cut-arc. This is again a contradiction to the fact that~$N$ is a simple level-$k$ network. Thus, we conclude that the parent~$r$ of~$x$ is a reticulation and that there is no arc between the parents of~$r$, as claimed.

Now, let~$\cT^*$ be the result of removing all trinets containing~$x$ from~$\cT$ and let~$N^*$ be the result of removing~$x$ from~$N$ and ``cleaning up'' the network by repeatedly deleting unlabelled outdegree-0 vertices and indegree-0 outdegree-1 vertices and suppressing indegree-1 outdegree-1 vertices, until a valid network is obtained. Note that it is not necessary to suppress parallel arcs and redundant biconnected components because these cannot arise by the described modifications since~$N$ is a tree-child network. Moreover, it can easily be seen that~$N^*$ is again a tree-child network and has level $k-1$. Since~$N$ has at least four leaves,~$N^*$ has at least three leaves. Hence, by induction, $N^*$ is encoded by its trinets. It follows that removing~$x$ from~$N'$, and cleaning up in the same way as we did in~$N$, also gives~$N^*$. Hence it only remains to show that the location of~$x$ in~$N$ and~$N'$ is the same.

\begin{figure}[t]
    \centering
    \vspace{-.3cm}
    \includegraphics[scale=.65]{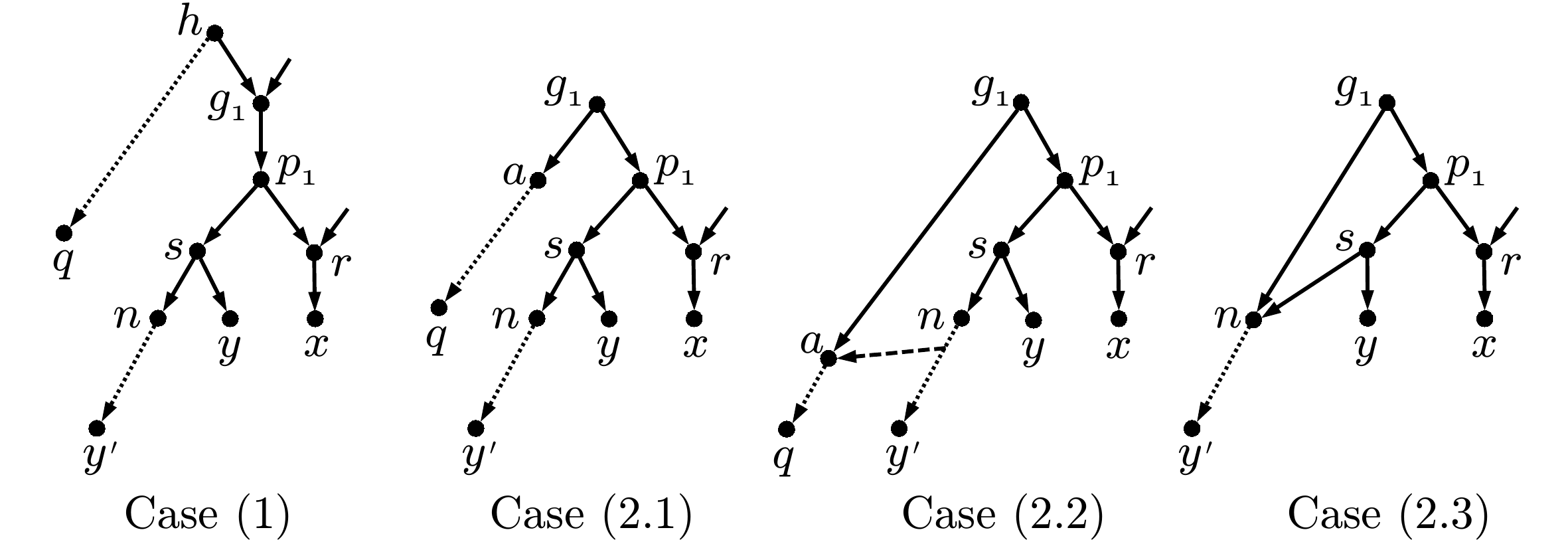}
    \vspace{-.3cm}
    \caption{Illustration of the main cases in the proof of Theorem~\ref{thm:treechild}. Dotted arcs denote tree paths (directed paths not passing through reticulations). The dashed arc from below~$n$ to~$a$ in Case~(2.2) denotes a directed path that might contain reticulations. Case~(3) is very similar to Case~(1).\label{fig:treechildthm}}
\end{figure}

To this end, consider again the reticulation~$r$ in~$N$ of which~$x$ is the child and the parents~$p_1$ and~$p_2$ of~$r$ in~$N$ and in~$N'$. We need to show that the location of~$p_1$ and~$p_2$ is the same in~$N$ and~$N'$. We consider~$p_1$ and note that the same arguments can be applied to~$p_2$. By the tree-child property, there is a tree path in~$N$ from~$p_1$ to a leaf~$y$. Thus,~$p_1$ has outdegree~2 and hence it is not a reticulation. We distinguish three cases: (1) the parent~$g_1$ of~$p_1$ in~$N$ is a reticulation, (2) the parent~$g_1$ of~$p_1$  in~$N$ has outdegree~2, and (3) $p_1$ is the root~$\rho$ of~$N$. See Figure~\ref{fig:treechildthm} for some illustrations of these cases.

Case (1): The parent~$g_1$ of~$p_1$ in~$N$ is a reticulation. Let~$h$ be a parent of~$g_1$ such that there exists no directed path from~$h$ to the other parent of~$g_1$. Then there is a tree path from~$h$ to some leaf~$q$. Observe that~$q$ and~$y$ must be distinct and that~$h$ must be their unique lowest common ancestor in~$N$ by Observation~\ref{obs:lcaunique}. Moreover, the same holds in~$N^*$ and consequently in~$N'$ because removing leaf~$x$ and cleaning up as specified does not affect lowest common ancestors of other leaves. Consider the trinet~$P_{xyq}\in\cT$ on $\{x,y,q\}$. In~$P_{xyq}$, $h$ is also the unique lowest common ancestor of~$q$ and~$y$ by Observation~\ref{obs:lcatrinet}. Moreover, in~$P_{xyq}$ also, $x$ is below the reticulation-child (which we can also call~$g_1$) of~$h$. This means that, because~$N'$ exhibits~$P_{xyq}$, $p_1$ is below~$g_1$ in~$N'$. We will show that~$p_1$ is in fact the child of~$g_1$ in $N'$ (just as it is in $N$).

Let~$s$ be the child of~$p_1$ other than~$r$, in~$N$. Note that~$s$ cannot be a reticulation by the tree-child property. If~$s$ is a leaf, then $s=y$ is the child of~$g_1$ in~$N^*$ and, since we know that~$p_1$ is below~$g_1$ in~$N'$,~$p_1$ can only be the child of~$g_1$ in~$N'$. Now suppose that~$s$ is not a leaf in~$N$. Then it has two children, and one of these children is~$y$ because otherwise~$y$ would be at greater distance from the root than~$x$. Let~$n$ be the other child of~$x$. (Note that~$n$ may or may not be a reticulation but that~$n$ cannot be equal to~$r$ because there is no arc between the parents of~$r$ (by the choice of~$x$).) There is a tree path from~$n$ to some other leaf~$y'$. By Observation~\ref{obs:lcaunique},~$s$ is the unique lowest common ancestor of~$y$ and~$y'$ in~$N$, in~$N'$ and in~$N^*$. In view of the trinet on $\{y,y',x\}$, it follows that $LCA(x,y)\leq_{N'} LCA(y,y')$, i.e. $p_1\leq_{N'} s$. Since~$s$ is below~$p_1$ and we already know that~$p_1$ is below~$g_1$, we conclude that~$p_1$ is the child of~$g_1$ in~$N'$, as required.

Case (2): The parent~$g_1$ of~$p_1$ has outdegree~2. Then~$g_1$ has some child~$a$ other than~$p_1$. Note that~$a$ cannot be equal to~$r$ because there is no arc between the parents of~$r$. From~$a$ there is a tree path to some leaf~$q$. As before, let~$s$ be the child of~$p_1$ other than~$r$, in~$N$. There is again a tree path from~$s$ to some leaf~$y$. If~$s$ is a leaf, then again~$s=y$ and, in view of the trinet on~$\{x,y,q\}$, $p_1$ is the parent of~$y$ in~$N'$. Now we distinguish three subcases: (2.1) $a\neq n$ and there is no directed path from~$s$ to~$a$, (2.2) $a\neq n$ and there \emph{is} a directed path from~$s$ to~$a$ and (2.3) $a=n$ (and consequently there is no directed path from~$s$ to~$a$).

In Case (2.1),~$q$ and~$y$ are distinct and~$g_1$ is their unique lowest common ancestor by Observation~\ref{obs:lcaunique}. We can now use similar arguments as in Case (1) to show that, in~$N'$,~$p_1$ is the child of~$g_1$ on the directed path to~$y$.

In Case (2.2),~$a$ is the unique reticulation from which there is a tree path to~$q$ and~$g_1$ is its parent from which there is a directed path to the other parent, in~$N,N^*,N'$ and in the trinet on~$\{q,y,x\}$, by Observation~\ref{obs:retictrinet}. In view of this trinet,~$p_1$ is below~$g_1$, on the directed path to~$y$. We can now use similar arguments as in Case (1) to show that, in~$N'$,~$p_1$ is the child of~$g_1$ on the directed path to~$y$.

In Case (2.3),~$s$ is the unique lowest common ancestor of~$y$ and~$y'$ in~$N,N^*,N'$ and in any trinet containing~$y,y'$ by Observations~\ref{obs:lcatrinet} and \ref{obs:lcaunique}. Also,~$g_1$ is the parent of~$s$ in~$N^*$. In view of the trinet on~$\{y,y',x\}$, $p_1$ has to be the parent of~$s=LCA(y,y')$ in~$N'$. Thus, the location of~$p_1$ is the same in~$N$ and~$N'$.

Case (3): $p_1$ is the root~$\rho$ of~$N$. We define~$s,n,y,y'$ as before. In this case,~$s$ is the root of~$N^*$. Hence,~$s$ cannot be a leaf. We can argue as in Case (1), concluding that~$p_1$ is the root of~$N'$.

After applying exactly the same arguments to~$p_2$ as we did to~$p_1$, it follows that, in all cases, the location of~$p_1$ and~$p_2$ is the same in~$N$ and~$N'$. Hence, the location of~$x$ is the same in~$N$ and~$N'$. It follows that~$N=N'$.
\end{proof}

\section{Discussion}\label{sec:postlude}

We have proven that binary, recoverable level-2 and binary tree-child networks
are encoded by their trinets, using 
two distinct methods of proof.
We expect that our results could also 
hold for non-binary networks, and it would be of interest 
to verify this.

For settling the question if all recoverable phylogenetic networks are encoded by their trinets, the decomposition theorems in Section~\ref{sec:decomp} will be useful since they essentially show that it is sufficient to answer this question for simple networks (i.e. networks having no cut-arcs apart from pendant arcs).

The proof for level-2 networks might be extended 
to show that higher level networks are encoded
by their trinets (or be used to provide a counter-example). 
However, a new technique would have to be developed 
for~$k\geq 4$ since, for such~$k$, there exist 
level-$k$ networks that have no crucial trinets.
Another difficulty is that 
the number of generators for level-$k$ networks
grows very rapidly (the number of level-$k$
generators is at least $2^{k-1}$ \cite{Gambette2009structure})
making a similar case analysis impossible in general.
To prove that tree-child networks 
are encoded by trinets, we heavily depended on special 
properties of such networks, and we have not
been able to find a way to extend our proof to 
even slightly more general networks (e.g.   
reticulation-visible networks~\cite{HusonRuppScornavacca10}).

We note that a natural extension to the definition 
of ``exhibit'' is to define it as in 
Observation~\ref{obs:exhibit} but to 
suppress not only \emph{strongly} redundant biconnected components, 
but \emph{all} redundant biconnected components. 
If one then changes the definition of ``recoverable'' 
accordingly (i.e. to not having any redundant 
biconnected components), then it can be checked 
that all proofs in this paper still hold. 
This could be relevant when reconstructing 
phylogenetic networks via trinets, because 
the number of recoverable trinets then 
becomes bounded by a function of~$k$.

It is also worth noting that 
Theorem~\ref{thm:bcc}, Theorem~\ref{lem:simplelevel2} and 
Theorem~\ref{thm:treechild} can be combined 
to provide the following more general result.

\begin{corollary}\label{cor:comb}
If~$X$ is a finite set with~$|X|\geq 3$ and~$\cN$ 
is the set of binary recoverable phylogenetic 
networks~$N$ on~$X$ for which each biconnected component of~$N$ either
\begin{itemize}
\item has at most two reticulations; or
\item is tree-child; or
\item has at most three outgoing cut-arcs,
\end{itemize}
then each~$N\in\cN$ is encoded by~$Tn(N)$.
\end{corollary}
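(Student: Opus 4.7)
The plan is to deduce Corollary~\ref{cor:comb} by a direct combination of Theorem~\ref{thm:bcc}, Theorem~\ref{lem:simplelevel2}, Theorem~\ref{thm:treechild}, and the level-1 encoding result from \cite{huber2011encoding}. The key reduction is Theorem~\ref{thm:bcc}: to show that~$N$ is encoded by~$Tn(N)$, it suffices to show that, for every nontrivial biconnected component~$B$ of~$N$ with at least four outgoing cut-arcs, the restriction~$N_B$ is encoded by~$Tn(N_B)$. In particular, any biconnected component satisfying the third bullet of the hypothesis (at most three outgoing cut-arcs) requires no further argument, so these components are automatically absorbed by Theorem~\ref{thm:bcc}.

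So fix a nontrivial biconnected component~$B$ of~$N$ with at least four outgoing cut-arcs. By hypothesis,~$B$ must then have at most two reticulations or be tree-child. If~$B$ has at most two reticulations, then~$N_B$ is a simple level-$k$ network with $k\in\{1,2\}$ on at least four leaves; for $k=2$ we apply Theorem~\ref{lem:simplelevel2}, and for $k=1$ we apply the main result of~\cite{huber2011encoding}, which says that level-1 networks on at least three leaves are encoded by their trinets. If instead~$B$ is tree-child, then so is~$N_B$: the only new vertices in~$N_B$ compared to~$B$ are the leaves~$y_1,\ldots,y_b$ attached by the cut-arcs $a_1,\ldots,a_b$ leaving~$B$, and attaching~$y_i$ as the second child of~$u_i$ supplies each such~$u_i$ with a non-reticulation child without changing the set of children of any other vertex of~$B$. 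Hence Theorem~\ref{thm:treechild} applies to~$N_B$ and shows that it is encoded by~$Tn(N_B)$.

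In every case~$N_B$ is encoded by~$Tn(N_B)$, so Theorem~\ref{thm:bcc} delivers the conclusion that~$N$ is encoded by~$Tn(N)$. The only detail requiring a moment's verification is that~$N_B$ inherits the relevant structural property (level bound or tree-child) from~$B$, but both are immediate from the definition of the restriction, since that operation affects only the labels at the ends of the outgoing cut-arcs and leaves the internal structure of~$B$ untouched. I do not foresee a genuine obstacle: Corollary~\ref{cor:comb} is essentially a packaging statement, and all of the real work has already been carried out in the theorems of Sections~\ref{sec:decomp}, \ref{sec:lev2} and~\ref{sec:treechild}.
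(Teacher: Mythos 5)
Your proposal is correct and takes essentially the same route as the paper, which derives Corollary~\ref{cor:comb} by precisely this combination of Theorem~\ref{thm:bcc} (absorbing components with at most three outgoing cut-arcs) with Theorem~\ref{lem:simplelevel2}, Theorem~\ref{thm:treechild}, and the level-1 encoding result of \cite{huber2011encoding} for the case of a single reticulation. Your explicit verifications---that $N_B$ inherits the tree-child property since each $u_i$ gains the non-reticulation child $y_i$, and that $N_B$ has at least four leaves so the cited theorems apply---are correct fillings of details the paper leaves implicit.
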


In addition, we note that our results also yield some 
new metrics on level-2 and tree-child networks.
These are of potential intesest since several metrics have been 
recently developed for special classes of networks
(see e.g. \cite{CRV08a,CLRV09a,CLRV09b,CLR11,huber2011encoding}).
More specifically, Corollary~\ref{cor:comb} immediately 
implies the following result (where~$\Delta$ 
denotes the symmetric difference of two sets).

\begin{corollary} \label{metric}
If~$X$ is a finite set with~$|X|\geq 3$ and~$\cN$ is
as in Corollary~\ref{cor:comb}, then the 
map $d: {\mathcal N} \times {\mathcal N} \to \R$ defined by
$$
d(N,N') := |Tn(N) \Delta Tn(N')|,
$$
for all $N,N'\in {\mathcal N}$, is a metric on 
${\mathcal N}$.
\end{corollary}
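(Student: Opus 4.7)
The plan is to verify the four defining properties of a metric in turn, three of which are immediate from general properties of the symmetric difference of sets, and one of which is a direct consequence of Corollary~\ref{cor:comb}.

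First I would observe that non-negativity $d(N,N')\geq 0$ is clear since $d(N,N')$ is defined as the cardinality of a finite set, and symmetry $d(N,N')=d(N',N)$ is immediate from $Tn(N)\Delta Tn(N')=Tn(N')\Delta Tn(N)$. For the triangle inequality, I would use the standard set-theoretic fact that $A\Delta C\subseteq (A\Delta B)\cup (B\Delta C)$ for any three sets $A,B,C$, which applied to $A=Tn(N)$, $B=Tn(N')$, $C=Tn(N'')$ gives
\[
d(N,N'')=|Tn(N)\Delta Tn(N'')|\leq |Tn(N)\Delta Tn(N')|+|Tn(N')\Delta Tn(N'')|=d(N,N')+d(N',N'').
\]

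The only non-trivial axiom is the identity of indiscernibles, i.e.\ $d(N,N')=0$ if and only if $N=N'$. One direction is obvious: if $N=N'$ then $Tn(N)=Tn(N')$ and the symmetric difference is empty. For the converse, suppose $N,N'\in\cN$ satisfy $d(N,N')=0$, so that $Tn(N)=Tn(N')$. By definition of $\cN$, the network $N'$ is recoverable, and by Corollary~\ref{cor:comb}, $N$ is encoded by $Tn(N)$. The definition of ``encoded'' then immediately forces $N=N'$, which is the desired conclusion. The main (and indeed only) point of the proof is thus the appeal to Corollary~\ref{cor:comb}; everything else is a routine verification of the formal properties of the symmetric difference distance on finite sets.
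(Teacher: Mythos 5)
Your proof is correct and takes essentially the same approach as the paper: the paper offers no written-out proof, stating only that Corollary~\ref{cor:comb} ``immediately implies'' the result, and your argument---routine symmetric-difference properties for non-negativity, symmetry and the triangle inequality, plus the encoding property (applicable since every member of $\cN$ is recoverable) for the identity of indiscernibles---is exactly that intended argument made explicit. Nothing further is needed.
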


Finally, it could be of some interest to 
study some algorithmic issues related to 
the results that we have presented.
For example, it would be interesting 
to know whether or not it is possible to reconstruct a 
recoverable (level-2 or tree-child) network from a 
set of trinets in polynomial time. Hopefully
shedding light on this and related 
complexity problems could help provide new algorithms for 
constructing phylogenetic networks. 

\bibliographystyle{amsplain}
\bibliography{trinetsbibliography}

\end{document}